\documentclass{article}

\usepackage{PRIMEarxiv}

\usepackage{hyperref}       
\usepackage{url}            
\usepackage{booktabs}       
\usepackage{amsfonts}       
\usepackage{nicefrac}       
\usepackage{microtype}      
\usepackage{lipsum}
\usepackage{fancyhdr}       
\usepackage[english]{babel}
\usepackage{amsmath,amssymb,amsthm}
\newtheorem{theorem}{Theorem}[section]
\newtheorem{definition}{Definition}[section]

\usepackage{enumitem}

\usepackage{pgfplots}
\pgfplotsset{compat = newest}

\usepackage{graphicx}       
\graphicspath{{media/}}     
\usepackage{subcaption}
\usepackage{wrapfig}
\usepackage{float}
\usepackage[skip=3pt, indent=15pt]{parskip}

\usepackage{booktabs}
\usepackage{array}
\newcolumntype{L}[1]{>{\raggedright\arraybackslash}p{#1}}

\usepackage{todonotes}

\definecolor{mgreen}{rgb}{0,0.455,0.247}
\definecolor{mblue}{rgb}{0.098,0.18,0.357}
\definecolor{mred}{rgb}{0.902,0.4157,0.0196}
\definecolor{mgrey}{rgb}{0.90196,0.90,0.90}
\definecolor{mblack}{rgb}{0,0,0}

\newcommand{\F}{{\mathcal{F}}}
\newcommand{\X}{{\mathcal{X}}}

\renewcommand{\P}{{\mathbb{P}}}
\newcommand{\E}{{\mathbb{E}}}

\newcommand{\Xsp}{{\mathfrak{X}}}

\newcommand{\R}{{\mathcal{R}}}

\pagestyle{fancy}
\thispagestyle{empty}
\rhead{ \textit{ }} 

\fancyhead[LO]{Robust Risk-Aware Option Hedging}

\title{Robust Risk-Aware Option Hedging
}

\author{David Wu 
\\Operations Research and Information Engineering \\ Cornell University 
\\ 
\href{mailto:dbw78@cornell.edu}{\texttt{dbw78@cornell.edu}} 
\and
\textbf{Sebastian Jaimungal}
\\Department of Statistical Sciences \\University of Toronto \\
\& \\
Oxford-Man Institute \\
University of Oxford \\
\href{mailto:sebastian.jaimungal@utoronto.ca}{\texttt{sebastian.jaimungal@utoronto.ca}}
}

\begin{document}
\maketitle

\begin{abstract}

The objectives of option hedging/trading extend beyond mere protection against downside risks, with a desire to seek gains also driving agent's strategies. In this study, we showcase the potential of robust risk-aware reinforcement learning (RL) in mitigating the risks associated with path-dependent financial derivatives. We accomplish this by leveraging a policy gradient approach that optimises robust risk-aware performance criteria. We specifically apply this methodology to the hedging of barrier options, and highlight how the optimal hedging strategy undergoes distortions as the agent moves from being risk-averse to risk-seeking. As well as how the agent robustifies their strategy. We further investigate the performance of the hedge when the data generating process (DGP) varies from the training DGP, and demonstrate that the robust strategies outperform the non-robust ones.
\end{abstract}

\section{Introduction}

Option hedging is a foundational issue in the field of mathematical finance, tracing back to the groundbreaking work of Black and Scholes \cite{black1973pricing} and Merton \cite{merton1973tro}. While exact replication of options is achievable in frictionless and complete markets, real-world markets entail transaction costs and necessitate discrete-time trading, rendering perfect replication unattainable. Notably, even under continuous trading conditions, delta hedging ceases to be optimal when trading costs are considered, and it is impossible to replicate arbitrary payoffs. Various studies have addressed the influence of market frictions on option hedging, including those by  Whalley and Wilmott \cite{Whalley1997} and Martinelli \cite{Martinelli2000}.

Contemporary research has increasingly focused on developing more robust, model-agnostic approaches to option hedging, with a growing interest in machine learning techniques (see Ruf and Wang \cite{Ruf2020ANNreview} for an overview). For instance, Cao et al. \cite{Cao2021rotman} and Kolm and Ritter \cite{kolm2019} have proposed hedging strategies that involve solving mean-variance optimization problems using neural networks, Lütkebohmert, et. al. \cite{lutkebohmert2022robust} study pricing under parameter uncertainty, and Gierjatowicz, et.al. \cite{gierjatowicz2020robust}  find robust bounds for prices of derivatives using neural stochastic differential equations. Recently\footnote{In parcticular, \cite{limmer2023robust} appeared online after the current work was submitted for review.}, Limmer \& Horvatz \cite{limmer2023robust} study robustification by selecting sample paths from a bank of models and penalise models by the amount they deviate from a reference model. 

As perfect replication is unattainable, agents  must bear some risk exposure when buying/selling options. The  agent's risk tolerance and appetite dictate the profit and loss (P\&L) profile they are willing to accommodate. Rather than aiming to replicate option payoffs, our approach, drawing upon the work of Ilhan et.al. \cite{ilhan2009} and Buehler et.al. \cite{buehler2019deep}, seeks to minimise measures of risk.   We propose employing rank dependent expected utility (RDEU) to more comprehensively represent agents' risk preferences, as suggested by Yaari \cite{yarri1987}. RDEU enables agents to incorporate the utility associated with outcomes and account for distortions in the probability of those outcomes, thereby facilitating both risk-averse and gain-seeking behaviour concurrently.

Existing literature on option hedging using machine learning largely neglects model misspecification and potential market condition changes. To safeguard agents against such uncertainties, we capitalise on advances in robust reinforcement learning (RL) as outlined by Rahimian and Mehrotra \cite{1908.05659}. Specifically, we employ the framework presented by Jaimungal et al. \cite{jaimungal2022robust} to robustify RDEU-based strategies and apply them to the option hedging problem.

In this framework, agents assess a strategy not by the RDEU of terminal wealth, but by distorting its distribution within a predefined ambiguity set. The ambiguity set comprises all distributions of terminal wealth situated within a Wasserstein ball surrounding the terminal wealth. Agents subsequently select actions that are optimal under the worst-case scenario within this ambiguity set. Given the absence of additional assumptions, this approach offers protection against various forms of model uncertainty, including those stemming from underlying asset process misspecification or alterations in market frictions. A similar approach for infinitesimally small uncertainty sets was investigated by Bartl, et.al. \cite{bartl2021sensitivity} where they consider sensitivity of a generic stochastic optimisation problem to model uncertainty -- in a perturbative fashion.

To approximate the agents' strategies, we employ neural networks. At each trading time, the neural network receives features representing the state and produces the hedge and we take a policy gradient approach to improve the policy.

We apply our algorithm to hedge barrier options, demonstrating its capacity to generate sound hedges for stochastic volatility models such as Heston, in the presence of market frictions. Importantly, our algorithm is not contingent upon a specific model and can be applied to any price process or derivative. Moreover, we illustrate that our robust models surpass the classic complete-market outcomes provided by Black-Scholes formulas for pricing barrier options.

The remainder of this article is organised as follows.  We provide the relevant mathematical background such as RDEU in Section \ref{sec: RM} and  Wasserstein distances in Section \ref{sec:Wasserstein}. Section \ref{sec: problem} formalises the robust and non-robust option hedging problems. Section \ref{sec:algorithm} provides policy gradient formulae for solving the optimisation problems. We  present results for hedging barrier call options in Section \ref{sec:results}, where we also investigate how robustness changes optimal strategies. Further examples demonstrate that robust hedges outperform non-robust hedges under model misspecification. 

\section{Risk Measures}

This section provides the basic definitions of the risk measures that we study and their key representation in terms of the distortion and quantile functions.
\label{sec: RM} We work on a probability space $(\Omega, \F, \P)$, moreover, let $\X$ represent the set of $\F$-measurable, $\P$-integrable, random variables. Let $Z_1, Z_2 \in \X$ be interpreted as random costs imposed on the agent. 

\begin{definition}
A {\normalfont risk measure} is a mapping $\rho: \mathcal{X} \to \mathbb{R}$ and is said to be
\begin{enumerate}[noitemsep]
    \item {\normalfont Monotone} if $Z_1 \leq Z_2$ implies $\rho(Z_1) \leq \rho(Z_2)$;
    \item {\normalfont Translation Invariant} if for $m \in \mathbb{R}, \rho (Z_1 + m) = \rho(Z_1) + m$;
    \item {\normalfont Positive Homogeneous} if for $\beta > 0, \rho(\beta Z_1) = \beta \rho(Z_1)$;
    \item {\normalfont Subadditive} if $\rho(Z_1 + Z_2) \leq \rho(Z_1) + \rho(Z_2)$;
    \item {\normalfont Coherent} if and only if conditions 1 -- 4 are satisfied
\end{enumerate}
\end{definition}

In this work, we focus on measuring risk using RDEU risk measures \cite{yarri1987}. These measures fall outside the class of coherent risk measures \cite{artzner1999coherent} but also include many special cases such as distortion risk measures and expected utility. Agents value wealth according to a utility function, but also consider probabilities of the events subjectively through a distortion function. Formally, RDEU is defined as follows.
\begin{definition}
Let g be an increasing function with g(0) = 0 and g(1) = 1. Let U be a non-decreasing concave an differentiable almost everywhere utility function. 
The RDEU of a random variable Z is defined as
$$
\R^U_g[Z] := \int_{-\infty}^{0} \Big(1 - g[\mathbb{P}(U(Z) > z)] \Big)\,dz - \int_{0}^{+\infty} g[\mathbb{P}(U(Z) > z)]\,dz
$$

\end{definition}
Under suitable conditions, $R^U_g[Z]$ has a more convenient representation as shown in Theorem \ref{thm: Alternate RDEU}.
\begin{theorem}[{\normalfont Representation of RDEU}]
\label{thm: Alternate RDEU}
If the distortion function g is left-differentiable, the RDEU can be rewritten as
$$\R^U_g[Z] := - \int_{0}^{1} U(F^{-1}_Z(s)) \,\gamma(s) \,ds$$
where $\gamma$ is given by $\gamma(u) := \partial_- g(x)|_{x= 1 - u}$, $\partial_-$ denotes the left hand derivative, and $F_Z^{-1}(s)$ denotes the quantile function of the random variable $Z$.
\end{theorem}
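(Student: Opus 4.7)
The strategy is to recognise $\R^U_g[Z]$ as (minus) a Choquet-type distorted expectation of the random variable $W := U(Z)$, and then to convert that Choquet integral into its quantile form. Writing $S_W(z) := \P(W > z)$, the definition reads
$$\R^U_g[Z] = \int_{-\infty}^{0}\bigl(1 - g(S_W(z))\bigr)\,dz - \int_{0}^{\infty} g(S_W(z))\,dz,$$
so the proof splits naturally into two pieces: (i) a purely distributional Choquet-to-quantile identity valid for any real-valued random variable, and (ii) the substitution $F_{U(Z)}^{-1}(s) = U(F_Z^{-1}(s))$, which holds whenever $U$ is non-decreasing and (left-)continuous by the standard composition rule for generalised inverses.

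For step (i), the target identity is
$$\int_{0}^{\infty} g(S_W(z))\,dz - \int_{-\infty}^{0}\bigl(1 - g(S_W(z))\bigr)\,dz = \int_{0}^{1} F_W^{-1}(s)\,\gamma(s)\,ds.$$
The key move is a Fubini/layer-cake computation: write $g(S_W(z)) = \int_0^{g(S_W(z))}\! du$ and swap orders of integration. Monotonicity of $g$ and $S_W$ turns the superlevel set $\{z\ge 0 : g(S_W(z)) > u\}$ into an interval in $z$ whose right endpoint is, up to left/right-continuity conventions, a quantile of $W$ evaluated at a function of $u$. The change of variable $u = g(1-s)$ then reorganises the double integral into $\int_0^1 F_W^{-1}(s)\,d[-g(1-s)]$ restricted to the positive axis. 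The negative-axis integral is handled identically by working with $1 - g(S_W(z))$ and using $g(0)=0$, $g(1)=1$; the resulting constant pieces cancel telescopically and the two halves combine into a single integral over $(0,1)$. Left-differentiability of $g$ then supplies the chain-rule identification $d[-g(1-s)] = \gamma(s)\,ds$ Lebesgue-almost everywhere, because the right derivative of $s \mapsto -g(1-s)$ at $s$ equals $\partial_- g(x)|_{x=1-s} = \gamma(s)$. Combining (i) and (ii) and taking the overall minus sign yields $\R^U_g[Z] = -\int_0^1 U(F_Z^{-1}(s))\,\gamma(s)\,ds$, as claimed.

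The main technical hurdle is atoms. The substitutions above are transparent when $W$ has a continuous, strictly increasing distribution (so $S_W$ is a bijection) and when $g$ is $C^1$; in the general case $F_W$ may have jumps and only the absolutely-continuous part of the Stieltjes measure $d[-g(1-s)]$ equals $\gamma(s)\,ds$. I would circumvent this by approximation: perturb $Z$ by an independent small uniform noise $\varepsilon\,\eta$ to smooth the distribution, establish the identity in the smooth case by direct substitution, and pass $\varepsilon \to 0^+$ by dominated convergence, noting that both sides depend continuously on the law of $W$ under the integrability implicit in $\R^U_g[Z]$ being finite. One also implicitly needs $g$ to be absolutely continuous (beyond bare left-differentiability) so that no singular mass is dropped; under that natural regularity assumption the Stieltjes measure and the density $\gamma$ coincide and the identity holds exactly as stated.
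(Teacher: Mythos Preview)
The paper does not supply a proof of this theorem; it is stated as a known representation result and used as a tool in the proof of Theorem~\ref{thm: Non-Rob Grad}. So there is nothing in the paper to compare your argument against line by line.

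On its own merits, your plan is the standard and correct route: recognise the defining expression as (minus) a Choquet integral of $W=U(Z)$, convert the Choquet integral to quantile form via a layer-cake/Fubini argument, and then use the composition rule $F_{U(Z)}^{-1}(s)=U(F_Z^{-1}(s))$ for non-decreasing left-continuous $U$. Your handling of the two half-lines and the change of variable $u=g(1-s)$ leading to $d[-g(1-s)]=\gamma(s)\,ds$ is exactly how this identity is usually derived.

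You are also right to flag the regularity point. Bare left-differentiability of $g$ does not by itself guarantee that the Lebesgue--Stieltjes measure $d[-g(1-s)]$ has no singular part; one needs $g$ to be absolutely continuous (or at least that the singular part vanishes) for the density representation $\gamma(s)\,ds$ to hold exactly. The paper tacitly assumes this (and in its applications to $\alpha$--$\beta$ risk measures $g$ is piecewise linear, hence absolutely continuous, so the issue does not arise). Your proposed fix via a smoothing approximation is workable, though the cleanest statement would simply add absolute continuity of $g$ to the hypotheses.
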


By utilising the agent's utility function and allowing for distortions in probabilities, the RDEU framework enables us to consider the agent's risk/gain seeking behaviour and ultimately provide a more comprehensive representation of their risk preferences. As option payoffs are not replicable outside of complete market settings, we utilise risk measures to quantify the agent's preferences when hedging.  

\subsection{Risk avoiding while Profit Seeking: \texorpdfstring{$\alpha-\beta$}{a-b} risk measures}
\label{sec: special}

In this subsection, we provide a prototypical example of an RDEU risk measure that we utilise in the hedging constructions. We employ  them  as they allow us to showcase how the investor reacts to avoiding risk while seeking gains with only a few parameters. Henceforth, we call such risk measures simply as $\alpha-\beta$ risk measures.
\begin{definition}[$\alpha-\beta$ risk measure (\cite{jaimungal2022robust})]
Let
$$\gamma^{\alpha,\beta}_p (u)= \frac{1}{\eta} \Big[\, p\, \mathbb{I}(u \leq \alpha) + (1-p)\, \mathbb{I}(u \geq \beta) \,\Big]$$
for $0< \alpha \leq \beta < 1$ and $p \in [0,1]$. $\eta := p\,\alpha + (1-p)(1-\beta)$ is  a normalising constant s.t. $\int_{0}^{1} \gamma^{\alpha, \beta}_p (u) = 1$. The $\alpha-\beta$ risk measure of a r.v. $Z\in\X$ is defined as:
\begin{equation}
    R^{\alpha,\beta}_p[Z] := \R_{\gamma^{\alpha,\beta}_p}^{f}[Z],
\end{equation}
where $f$ is the identify function, i.e., $f(x)=x$.
\end{definition}

Figure \ref{fig:gamma} provides a visual representation of $\gamma(u)$ for select parameters. Different values of $\alpha$,  $\beta$, and $p$ result in U-shaped distortion functions, which model agents who are concerned about profits below the $\alpha$ quantile and above the $\beta$ quantile. Moreover, if $p > \frac{1}{2}$, the investor emphasises losses over gains. The opposite is true for $p < \frac{1}{2}$. 
\begin{figure}[h]
  \centering
  \includegraphics[width=0.8\textwidth]{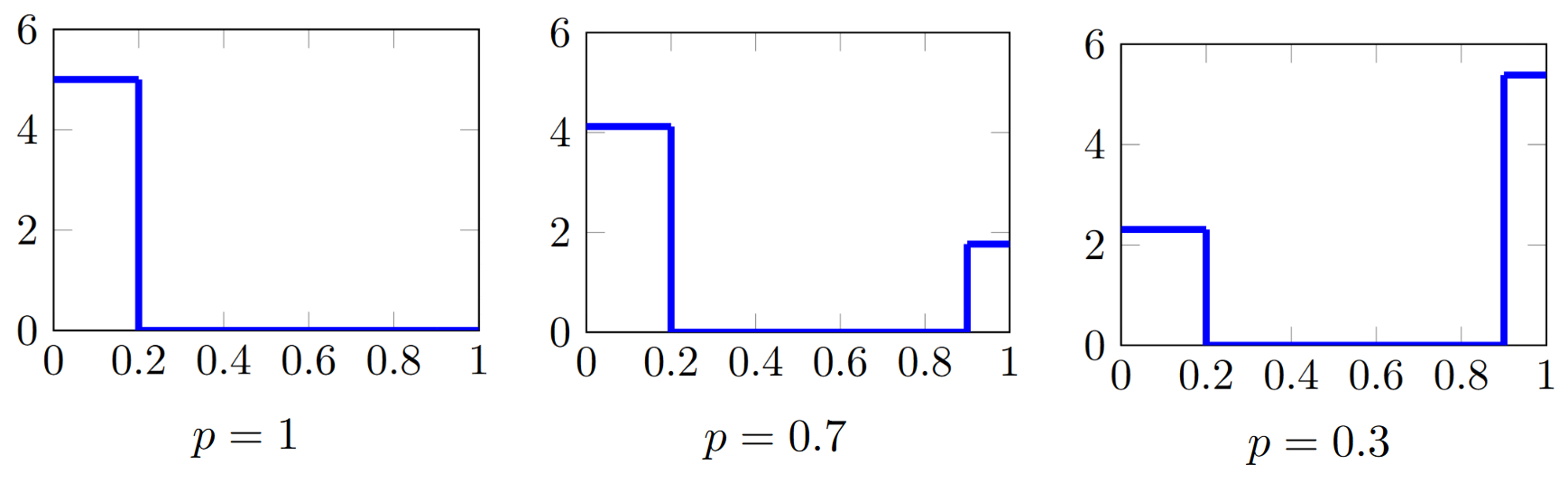}
  \caption{Depictions of $\gamma(u)$ for $\alpha = 0.2$ and $\beta = 0.9$ as $p$ varies. }
  \label{fig:gamma}
\end{figure}

The $\alpha-\beta$ risk measures contain several notable special cases, such as expected utility and Conditional-Value-at-Risk (CVaR). For example, when $p = 1$, as shown on the leftmost panel of Figure \ref{fig:gamma}, the RDEU reduces to  CVaR at level  $\alpha$. Furthermore, such risk measures are translation invariant and positive homogeneous. These properties play a role when analysing the optimal hedging strategies. 

\section{Wasserstein Distances}
\label{sec:Wasserstein}
In this section, we present a brief overview of Wasserstein distances, which help establish a metric on the set of probability distributions and allow us to define our ambiguity set. For a more detailed coverage of the topic, the reader is referred to, e.g., \cite{Santambrogio}. In many settings, and in particular for option hedging, the asset price process may be estimated using past data. Once estimated, we can select a strategy and determine its terminal P\&L distribution. A natural way to protect the agent from model uncertainty is to consider a family of distributions within a neighbourhood, almost like a confidence interval, of our estimated distribution. Wasserstein distances provide a way to do so. 
Let $\mathcal{P}(\Xsp)$ be the space of probability measures on $\Xsp \subset \mathbb{R}^d$, and define: 
$$\mathcal{P}_p(\Xsp) = \left \{ \mu \in \mathcal{P}(\Xsp)\;\left|\; \int_{\Xsp} | x | ^ p d \mu(x) < + \infty \right. \right \}$$
with $p\ge1$, i.e. the space of probability measures with bounded $p^{\normalfont th}$ moment. 

\begin{definition}
Given measures $\mu, \nu \in \mathcal{P}_p(\Xsp)$, the Wasserstein distance of order p is defined as
\begin{equation}
d_p [\mu, \nu] = \inf_{\pi \in \Pi(\mu, \nu)} \left( \int_{\Xsp \times \Xsp} \left| x - y\right|^p d\pi(x,y) \right) ^{\frac{1}{p}}\,,
\label{eqn:wasserstein}
\end{equation}
where $\Pi(\mu, \nu)$ denotes the set of \textit{transport plans} and contains all measures $\pi$ such that $\pi(A\times \Xsp) = \mu(A)$ and $\pi(\Xsp \times B) = \nu(B)$, for all measurable sets $A \subseteq \Xsp$ and $B \subseteq \Xsp$. 
\end{definition}
Intuitively, the Wasserstein distance measures the cheapest way to transport mass between distributions. The constraints on the transport plans ensures that the total mass removed from a measurable set $A \subseteq \Xsp$ is equal to $\mu(A)$, and total mass transferred to $B \subseteq \Xsp$ is equal to $\nu(B)$.

We focus on the single asset hedging problem which leads to a univariate P\&L random variable. In one-dimension, the optimal coupling that attains the $\inf$ in \eqref{eqn:wasserstein} is the comonotonic coupling. This  leads to a tractable representation of the Wasserstein distance. Consider two measures $\mu, \nu \in \mathcal{P}(\mathbb{R})$ having cumulative distribution functions (cdf) $F$ and $G$, respectively. Define the generalised inverse of a function $F$ on $[0,1]$ as $F^{-1}(t) = \inf\{x \in \mathbb{R} \mid F(x) >t\}$, i.e., when $F$ is a cdf then $F^{-1}$ is the quantile function. The Wasserstien distance is given by (see, e.g., \cite[Chap. 2]{ambrosio2003lecture})
$$
d_p[\mu, \nu] = \left( \int^{1}_{0} \left| F^{-1}(u) - G^{-1}(u) \right| ^p du \right).
$$

\section {Problem Formulation}
\label{sec: problem}
In this section, we provide an overview of the option hedging problem and how robust methods can be used to protect against model uncertainty. 

Consider an agent who shorts an option with payoff $V_T$ at maturity date. The price earned from selling the option at $t=0$ is denoted by $V_0$. We assume that the asset does not pay dividends, but this may be easily incorporated into the setup. Let $(S_t)_{t \in [0,T]}$  denote the underlying asset price process and -- with a slight abuse of notation --- let $(S_0, S_1, ... , S_{N-1}, S_N)$ represent the asset price at time points $0 = t_0 < t_1 < t_2 ... < t_{N-1} < T$. 
At each $t_i$, the agent observes the stock price and chooses the amount $\Delta_i$ of the asset to hold. For convenience, let $ (\Delta_i^\phi)_{i = 0, 1 ... N-1}$ represent the series of actions, and $\varphi$ represent the set of all hedging strategies. 

In some examples, we include market frictions in the form of proportional transaction costs. Formally, given a constant $c \in \mathbb{R}_+$, if the stock price is $S_t$ and $q$ shares are purchased, we impost a cost of $c\, S_t \,q$. Then, given a sequence of trades $\phi$ the agent's terminal wealth is given by
\begin{equation}
\label{eqn:wealth}
    X^\phi = V_0 +  \sum_{i=0}^{N-1} \Delta_i^\phi(S_{i+1} - S_i) - \sum_{i=0}^{N-1}c \left| \Delta_i^\phi - \Delta_{i-1}^\phi \right| S_i - V_{N},
\end{equation}
where $\Delta_{-1}^\phi := 0$ 
represents the asset holdings before any trading.

For simplicity, we set interest rates to 0, however, positive or even stochastic interest rates can be easily incorporated. Furthermore, as $\alpha-\beta$ risk measures  are translation invariant, we can safely ignore $V_0$ in \eqref{eqn:wealth} when optimizing $X^\phi$.

\subsection{Non-Robust Formulation}
\label{sec: non-rob}
We first provide the non-robust formulation of the problem. The agent seeks  to minimise the risk $\R^U_g[X^\phi]$ of their terminal wealth over the parameter set $\phi\in\varphi$. Assuming the agent is certain that the model is correctly specified (i.e. the distribution of $X^\phi$ is known with certainty for all actions $\phi$),  then the their optimisation problem may be stated as
\begin{equation}
    \inf_{\phi \in \varphi} \R^U_g[X^\phi],
\end{equation}
where the strategy $\phi$ is parametrised by a neural network that takes observed states as inputs and provides the strategy as output -- e.g., a natural choice of states would be the current time and asset price. Thus, the goal is to adjust the weights of the neural net to select the actions that induce the lowest risk. 

\subsection{Robust Formulation}
We also provide a robust formulation of the problem, as agents often do not have full knowledge of the model that drives the asset prices. Even if the agent uses historical data, they should robustify strategies to avoid obtaining strategies that are tuned to the historical price dynamics and work poorly out of sample. If agents use parametric models, then there is uncertainty in the value of those parameters, and even uncertainty whether they chose the correct parametric class of models. A third example of model uncertainty stems from financial markets not being stationary, hence a model that is well calibrated to some slice of historical data may not explain the dynamics well in the future.

As before, we use neural networks to learn the optimal mapping between observed states to $\phi$. To build robustness into the model, we consider all probability distributions in an \textit{ambiguity set}. Specifically, let $\vartheta \subseteq \mathbb{R}^n$ parameterise the ambiguity set, and $X^\theta$ be a $\mathbb{R}$-valued random variable parameterised by $\theta \in \vartheta$. When the agent considers model uncertainty, $X^\phi$ from the previous section serves as an approximation to the true wealth random variable. Thus, the agent builds uncertainty around this base model and uses an ambiguity set $\vartheta_\phi \subset \vartheta$ defined below. To protect against uncertainty, the agent picks actions to minimise the risk associated with the worst case distribution in $\vartheta_\phi$. Under this framework, we define the agent's optimisation as
\begin{equation}
    \inf_{\phi \in \varphi} \sup_{\theta \in \vartheta_\phi} \R^U_g \left[X^\theta \right] \textrm{\hspace{0.8cm} where \hspace{0.8cm} } \vartheta_\phi := \left\{ \theta \in \vartheta \mid d_p \left[X^\theta, X^\phi \right] \leq \epsilon \right\}.
\end{equation}
This formulation may be viewed as an adversarial attack. The agent picks actions $\phi\in\varphi$ which induces a terminal wealth of $X^\phi$. The adversary then distorts $X^\phi$ to $X^\theta$ that corresponds to the worst performance within a Wasserstein ball around $X^\phi$. To model this relationship, we set $X^\theta = H_\theta (X^\phi)$  where $H_\theta(\cdot)$ is an neural network parameterised by $\theta$. If  $X^\phi$ and $X^\theta$ are both continuous r.v., then, there always exists a neural network $\theta$ that achieves equality in distribution -- which is all that we need as the risk measures we employ are law invariant. Specifically, let  $F_{X^\theta}(x)$ denote the cdf of $X^\theta$, then the r.v. 
$
F_{X^\theta}^{-1}\big(F_{X^\phi}(X^\phi)\big)
$
has cdf $F_{X^\theta}(x)$ and, as under the assumption that $X^\theta$ and $X^\phi$ are continuous r.v., we have that  $F_{X^\theta}^{-1}\big(F_{X^\phi}(x)\big)$ is continuous and hence there exists a neural network that approximates it (on a finite domain) arbitrarily well.

This formulation allows for uncertainty on the r.v. $X^\phi$. Thus, it incorporates potentially both uncertainty in the underlying processes that generate $X^\phi$, transaction costs, and the strategy itself.

\section{Optimisation Algorithm}
\label{sec:algorithm}
In this section, we derive policy gradients for solving the non-robust and robust problems. The robust formulation reduces to the non-robust problem as $\epsilon \to 0$. However, solving the non-robust version is less computationally intensive, so we present separate algorithms for the two cases.

\subsection{Non-Robust Problem}
We optimise $\R^U_g[X^\phi]$ over the possible actions by using batch gradient descent where the gradient is taken with respect to the parameters of the policy. However, many risk measures such as CVaR yield a derivative with discontinuities. Na\"ive back-propagation leads to inaccuracies and we provide a different gradient formula.

\begin{theorem}
\label{thm: Non-Rob Grad}
Let $F_\phi(x)$ and $f_\phi(x)$ be the cdf and pdf of $X^\phi$, respectively.
Under the assumptions in Theorem \ref{thm: Alternate RDEU}, and further assuming that $X^\phi$ is a continuous r.v., we have that
$$
\nabla_\phi \R^U_g[X^\phi] = \E \left[ U'(X^\phi) \;\gamma\left(F_\phi(X^\phi) \right) \frac{\nabla_\phi F_\phi(x) \mid _{x = X^\phi}}{f_\phi(X^\phi)}\right]\,.
$$
\end{theorem}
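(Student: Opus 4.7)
The plan is to start from the quantile representation of $\R^U_g$ provided by Theorem \ref{thm: Alternate RDEU}, differentiate under the integral, convert a derivative of the quantile function into a derivative of the cdf via an implicit-function argument, and finally rewrite the resulting Lebesgue integral on $[0,1]$ as an expectation under $\P$ using the probability integral transform.

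Concretely, the starting point is
\begin{equation*}
\R^U_g[X^\phi] \;=\; -\int_0^1 U\bigl(F_\phi^{-1}(s)\bigr)\,\gamma(s)\,ds.
\end{equation*}
Assuming enough regularity on $\phi \mapsto F_\phi$ (e.g., smoothness of the parameterisation together with a dominating integrable bound on the integrand, so that Leibniz's rule applies), I would interchange $\nabla_\phi$ and $\int_0^1 ds$ to obtain
\begin{equation*}
\nabla_\phi \R^U_g[X^\phi] \;=\; -\int_0^1 U'\bigl(F_\phi^{-1}(s)\bigr)\,\nabla_\phi F_\phi^{-1}(s)\,\gamma(s)\,ds.
\end{equation*}

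Next, the key identity is the implicit relation $F_\phi\bigl(F_\phi^{-1}(s)\bigr) = s$. Differentiating both sides with respect to $\phi$ (using the chain rule, and distinguishing the explicit $\phi$-dependence inside $F_\phi$ from the dependence through its argument) yields
\begin{equation*}
\nabla_\phi F_\phi(x)\bigr|_{x=F_\phi^{-1}(s)} \;+\; f_\phi\bigl(F_\phi^{-1}(s)\bigr)\,\nabla_\phi F_\phi^{-1}(s) \;=\; 0,
\end{equation*}
which, using continuity of $X^\phi$ so that $f_\phi > 0$ on the relevant support, gives $\nabla_\phi F_\phi^{-1}(s) = -\nabla_\phi F_\phi(x)|_{x=F_\phi^{-1}(s)} / f_\phi\bigl(F_\phi^{-1}(s)\bigr)$. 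Substituting, the overall sign flips and the formula becomes an integral over $s\in[0,1]$ with integrand $U'(F_\phi^{-1}(s))\,\gamma(s)\,\nabla_\phi F_\phi(x)|_{x=F_\phi^{-1}(s)} / f_\phi(F_\phi^{-1}(s))$.

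The last step is a change of variable: because $X^\phi$ is continuous, $F_\phi(X^\phi)$ is uniformly distributed on $[0,1]$, and equivalently $F_\phi^{-1}(U) \stackrel{d}{=} X^\phi$ for $U\sim\mathrm{Unif}[0,1]$. Setting $s = F_\phi(X^\phi)$ turns the deterministic integral into an expectation, and produces exactly the claimed expression
\begin{equation*}
\nabla_\phi \R^U_g[X^\phi] \;=\; \E\!\left[U'(X^\phi)\,\gamma\bigl(F_\phi(X^\phi)\bigr)\,\frac{\nabla_\phi F_\phi(x)\bigr|_{x=X^\phi}}{f_\phi(X^\phi)}\right].
\end{equation*}
The main technical obstacle is justifying the differentiation under the integral: one needs an integrable dominating function for $s\mapsto U'(F_\phi^{-1}(s))\,\gamma(s)\,\nabla_\phi F_\phi^{-1}(s)$ uniformly in a neighbourhood of $\phi$, which in turn constrains how the tails of $X^\phi$ and the distortion weight $\gamma$ interact (the $1/f_\phi$ factor near the tails is the delicate term). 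The existence of $\nabla_\phi F_\phi^{-1}$ likewise requires $f_\phi > 0$ along with smoothness of $\phi\mapsto F_\phi(x)$; these conditions should be stated alongside the hypotheses already inherited from Theorem \ref{thm: Alternate RDEU}.
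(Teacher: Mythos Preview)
Your proposal is correct and follows essentially the same route as the paper: differentiate the quantile representation from Theorem~\ref{thm: Alternate RDEU} under the integral, use implicit differentiation of $F_\phi(F_\phi^{-1}(s))=s$ to express $\nabla_\phi F_\phi^{-1}(s)$ in terms of $\nabla_\phi F_\phi$ and $f_\phi$, and then reinterpret the $ds$-integral as an expectation via the probability integral transform. Your additional remarks on the regularity needed to justify interchanging $\nabla_\phi$ and the integral go beyond what the paper spells out and are a welcome refinement.
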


\begin{proof}
As $F_\phi\left(F^{-1}_\phi(s)\right) = s$, taking gradients on both sides and using the chain rule we have
$$
f_\phi \left( F^{-1}_\phi(s)\right) \;\nabla_\phi F^{-1}_\phi(s)+
\nabla_\phi F_\phi\left(x\right)|_{x=F^{-1}_\phi(s)}
\nabla F^{-1}_\phi(s)
= 0
$$
and rearranging yields 
$$
\nabla_\phi F^{-1}_\phi(s) = - \left.\nabla_\phi F_\phi(x)\Big|_{x = F^{-1}_\phi(s)}\,\right/\,f_\phi\left(F^{-1}_\phi(s)\right).
$$
Using the representation of $\R^U_g[X^\phi]$ in Theorem \ref{thm: Alternate RDEU}, we have that
\begin{align}
\nabla_\phi \R^U_g[X^\phi] &= - \int_{0}^{1} U'(F^{-1}_\phi(s))\; \nabla_\phi F^{-1}_\phi(s) \;\gamma(s) ds \\
&=  \int_{0}^{1} U'(F^{-1}_\phi(s)) \; \gamma(s)\; \frac{\nabla_\phi F_\phi(x) \mid _{x = F^{-1}_\phi(s)}}{f_\phi(F^{-1}_\phi(s))}ds\,.
\end{align}
For any $u \sim \mathcal{U}(0,1)$, we have that $F^{-1}(u)  \stackrel{d}{=} X^\phi$, and the theorem follows by interpreting the integral as an expectation over a uniform random variable.
\end{proof}

The gradient formula requires knowing $F_\phi(x)$ and $\nabla_\phi F_\phi(x)$, but in general there is no analytical expression for these quantities. We instead take advantage of the mini-batch environment and use samples to generate a Gaussian kernel density estimator $\hat F_\phi(x)$ to approximate $F_\phi(x)$. Specifically, suppose we have a sample of $N$ points of $X^\phi$ given by $\{x^{(1)}_\phi, x^{(2)}_\phi, ... x^{(N)}_\phi\}$, then
$$
\hat F_\phi(x) = \frac{1}{N} \sum^N_{i=1} \Phi\left(\frac{x-x_\phi^{(i)}}{h}\right) 
$$
where $\Phi(\cdot)$ denotes the cdf of a  standardised Gaussian and $h$ is a bandwidth selected, see, e.g., \cite{gramacki2018nonparametric} for discussions around selecting bandwidths. In our implementations we use Silverman's rule divided by two : $h= 0.53\,\widehat\sigma\, N^{-1/5}$ with $\widehat\sigma$ being the sample standard deviation. Using the chain rule, we then have
$$
\nabla_\phi F_\phi(x) \approx \nabla_\phi \hat F_\phi(x) = - \frac{1}{N} \sum_{i=1}^N \tfrac1h\Phi'\left(\frac{x-x_\phi^{(i)}}{h}\right)\nabla_\phi x_\phi^{(i)}\,.
$$
Substituting this expression into Theorem \ref{thm: Non-Rob Grad}, and estimating expectation with the sample mean from the same mini-batch of data, the gradient can be approximated by
$$ \nabla R^U_g [X^\phi] \approx - \frac{1}{N} \sum^{N}_{i=1} \left[ U'(x^{(i)}_\phi) \;
\gamma\left(\hat F_\phi(x^{(i)}_\phi) \right)
\;\frac{\sum_{j=1}^N 
\Phi'\left((x^{(i)}_\phi-x_\phi^{(j)})/h\right)
\nabla_\phi x_\phi^{(j)}}{\sum_{j=1}^N 
\Phi'\left((x^{(i)}_\phi-x_\phi^{(j)})/h\right)} \right].
$$
The remaining task is to obtain $\nabla_\phi x_\phi^{(j)}$. These may be computed using back-propagation numerically and there is no need for explicit formulae.

\subsection{Robust Problem}
The steps for solving the robust problem are derived in Jaimungal et al. \cite{jaimungal2022robust}. We briefly summarise the the main results here. Denote the cdf and pdf of $X^\phi$ by $G_\phi$ and $g_\phi$, respectively. Similarly, denote the cdf and pdf of $X^\theta$ by $F_\theta$ and $f_\theta$, respectively. We define $c[X^\theta, X^\phi] = ((d_p[X^\theta, X^\phi] - \epsilon^p)_+$ to denote a penalty for being outside of the desired Wasserstein ball. To enforce the constraints, the  augmented Lagrangian approach (which utilizes both a Lagrange multiplier and a quadratic penalty, see, e.g., \cite{birgin2014practical}, Chapter 4) may be used as the loss function, and we optimise
$$
L[\theta, \phi] = \R^U_g[X^\theta] + \lambda \,c[X^\theta, X^\phi] + \frac{\mu}{2}\left(c[X^\theta, X^\phi] \right)^2.
$$
Here, $\mu$ is a penalty constant that is to be updated as one traverses the loss function along with the Lagrange multiplier $\lambda$.

\begin{theorem}[Inner Gradient Formula \cite{jaimungal2022robust}]
\label{thm: inner}
Under the assumptions of  Theorem \ref{thm: Alternate RDEU}, let $X^\phi_c$ be a reordering of the realisations of $X^\phi$ so that $(X^\theta, X^\phi)$ are comonotonic. Then the inner gradient is
\begin{equation}
    \nabla_\theta L[\theta, \phi]= \E \left[ \left\{ U'(X^\theta) \gamma\left(F_\theta(X^\theta) \right) - p \Lambda \left| X^\theta - X^\theta_c \right|^{p-1} \text{sgn} (X^\theta - X^\theta_c)  \right\} \frac{\nabla_\theta F_\theta(x) \mid _{x = X^\theta}}{f_\theta(X^\theta)}\right]
\end{equation}

where $\Lambda = (\lambda + \mu c[X^\theta, X^\phi])  \mathbb{I}(d_p[X^\theta, X^\phi]>\epsilon)$ is a constant.
\end{theorem}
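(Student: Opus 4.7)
The plan is to exploit the additive decomposition
$L[\theta,\phi] = \R^U_g[X^\theta] + \lambda\,c[X^\theta,X^\phi] + \tfrac{\mu}{2}\,c[X^\theta,X^\phi]^2$
and differentiate the risk part and the penalty part separately. The risk part $\R^U_g[X^\theta]$ has precisely the same functional form in $\theta$ as the non-robust objective has in $\phi$, so its $\theta$-gradient follows by a verbatim replay of Theorem \ref{thm: Non-Rob Grad}, yielding the first summand $\E\bigl[U'(X^\theta)\,\gamma(F_\theta(X^\theta))\,\nabla_\theta F_\theta(x)|_{x=X^\theta}/f_\theta(X^\theta)\bigr]$ of the claimed formula. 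The only genuinely new work is differentiating the Wasserstein penalty.

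For the penalty I would use the one-dimensional comonotonic representation of the $p$-Wasserstein distance recalled in Section \ref{sec:Wasserstein}, $d_p^p[X^\theta,X^\phi] = \int_0^1 |F_\theta^{-1}(u) - F_\phi^{-1}(u)|^p\,du$, which is $\theta$-dependent only through $F_\theta^{-1}$. Leibniz then gives
$$
\nabla_\theta d_p^p[X^\theta,X^\phi] = p\int_0^1 |F_\theta^{-1}(u) - F_\phi^{-1}(u)|^{p-1}\,\text{sgn}\bigl(F_\theta^{-1}(u)-F_\phi^{-1}(u)\bigr)\,\nabla_\theta F_\theta^{-1}(u)\,du.
$$
I would then reuse the quantile--cdf identity established inside the proof of Theorem \ref{thm: Non-Rob Grad}, namely $\nabla_\theta F_\theta^{-1}(u) = -\nabla_\theta F_\theta(x)|_{x=F_\theta^{-1}(u)}/f_\theta(F_\theta^{-1}(u))$, and change variables $u \mapsto U \sim \mathcal{U}(0,1)$ to recast the integral as an expectation. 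Under this change of variables $F_\theta^{-1}(U) \stackrel{d}{=} X^\theta$ and $F_\phi^{-1}(U)$ is precisely the comonotonic rearrangement $X^\phi_c$ paired with $X^\theta$, producing a penalty gradient of $-p\,\E\bigl[|X^\theta - X^\phi_c|^{p-1}\,\text{sgn}(X^\theta - X^\phi_c)\,\nabla_\theta F_\theta(x)|_{x=X^\theta}/f_\theta(X^\theta)\bigr]$.

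To close the argument, the chain rule applied to the composition $(\lambda(\cdot) + \tfrac{\mu}{2}(\cdot)^2)\circ(\cdot - \epsilon^p)_+$ evaluated at $d_p^p[X^\theta,X^\phi]$ supplies the scalar multiplier $(\lambda + \mu\,c[X^\theta,X^\phi])\,\mathbb{I}(d_p[X^\theta,X^\phi] > \epsilon) = \Lambda$, and summing with the risk-part gradient collects both contributions inside a single expectation with common factor $\nabla_\theta F_\theta(x)|_{x=X^\theta}/f_\theta(X^\theta)$, producing exactly the stated formula.

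The main obstacle, in my view, is justifying the interchange of $\nabla_\theta$ with the integral defining $d_p^p$. The integrand is non-smooth along the diagonal $\{F_\theta^{-1}(u) = F_\phi^{-1}(u)\}$ through the factor $|\cdot|^{p-1}\text{sgn}(\cdot)$, and the positive part $(\cdot)_+$ is non-differentiable on the Wasserstein-ball boundary $\{d_p = \epsilon\}$. Away from these measure-zero sets a dominated-convergence argument --- using continuity of $X^\theta$ in $\theta$ inherited from smoothness of the neural network $H_\theta$, together with an integrable envelope on compact $\theta$-neighbourhoods --- licenses differentiation under the integral; on the boundary a subgradient argument matches the indicator baked into $\Lambda$. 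A secondary technicality is to verify that $\nabla_\theta F_\theta^{-1}(u)$ exists in a sense strong enough to support the quantile identity, which again follows from continuity of $X^\theta$ and smoothness of the parameterisation of $H_\theta$.
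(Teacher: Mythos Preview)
Your proposal is correct and follows exactly the route the paper indicates: the paper itself does not spell out a proof but merely states that it ``utilises Theorem \ref{thm: Non-Rob Grad} coupled with the results from \cite{jaimungal2022robust}'', which is precisely your decomposition into the RDEU term (handled by a verbatim replay of Theorem \ref{thm: Non-Rob Grad}) plus the Wasserstein penalty term (handled via the one-dimensional comonotonic representation, the quantile--cdf identity, and the chain rule for the augmented Lagrangian). Note also that your derivation correctly produces $|X^\theta - X^\phi_c|$ in the penalty term, matching the definition of $X^\phi_c$ given in the hypothesis; the $X^\theta_c$ appearing in the displayed formula of the theorem statement is evidently a typo.
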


\begin{theorem}[Outer Gradient Formula \cite{jaimungal2022robust}]
Under the same assumptions as Theorem \ref{thm: inner}, the outer gradient is 
\begin{multline}
    \nabla_\phi L[\theta, \phi]= \E \left[ U'(X^\theta) \gamma\left(F_\theta(X^\theta) \right) \frac{\nabla_\phi F_\theta(x) \mid _{x = X^\theta}} {f_\theta(X^ \theta)} \right.\\ \left.- p \Lambda \left| X^\theta - X^\theta_c \right|^{p-1} \text{sgn} (X^\theta - X^\theta_c) \left( \frac{\nabla_\theta F_\theta(x) \mid _{x = X^\theta}}{f_\theta(X^\theta)} + \frac{\nabla_\phi G_\phi(x) \mid _{x = X^\phi}}{g_\phi(X^\phi)} \right) \right]
\end{multline}
\end{theorem}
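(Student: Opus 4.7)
The plan is to differentiate $L[\theta,\phi]$ term by term with respect to $\phi$, noting that both $\mathcal{R}^U_g[X^\theta]$ (through $X^\theta = H_\theta(X^\phi)$) and the penalty $c[X^\theta,X^\phi]$ depend on $\phi$. Chain-rule on the Lagrangian gives
\begin{equation*}
\nabla_\phi L[\theta,\phi] = \nabla_\phi \mathcal{R}^U_g[X^\theta] + \bigl(\lambda + \mu\, c[X^\theta,X^\phi]\bigr)\,\mathbb{I}\bigl(d_p[X^\theta,X^\phi]>\epsilon\bigr)\,\nabla_\phi d_p^p[X^\theta,X^\phi],
\end{equation*}
where the indicator comes from differentiating the positive-part function in $c$, and where we have collapsed the two penalty summands into the single factor $\Lambda$ defined in Theorem \ref{thm: inner}.

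For the first summand I would replay the calculation behind Theorem \ref{thm: Non-Rob Grad} verbatim, but now with $\phi$ in the role of the differentiation variable and $F_\theta,f_\theta$ in the role of $F_\phi,f_\phi$. Using the identity $F_\theta(F_\theta^{-1}(s))=s$ to obtain
$\nabla_\phi F_\theta^{-1}(s) = -\nabla_\phi F_\theta(x)|_{x=F_\theta^{-1}(s)}/f_\theta(F_\theta^{-1}(s))$,
substituting into the quantile representation of RDEU from Theorem \ref{thm: Alternate RDEU}, and then re-interpreting the integral over $s\in[0,1]$ as an expectation via $F_\theta^{-1}(U)\stackrel{d}{=}X^\theta$, yields the first bracketed term in the outer gradient.

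For the penalty, I would use the univariate Wasserstein representation from Section \ref{sec:Wasserstein},
\begin{equation*}
d_p^p[X^\theta,X^\phi] = \int_0^1 \bigl|F_\theta^{-1}(u) - G_\phi^{-1}(u)\bigr|^p\,du,
\end{equation*}
differentiate under the integral (since $p\ge 1$ and the integrand is smooth off the diagonal) to get an integrand proportional to
$p\,|F_\theta^{-1}(u)-G_\phi^{-1}(u)|^{p-1}\,\mathrm{sgn}(F_\theta^{-1}(u)-G_\phi^{-1}(u))\,(\nabla_\phi F_\theta^{-1}(u)-\nabla_\phi G_\phi^{-1}(u))$,
then apply the same inverse-CDF trick to each of $\nabla_\phi F_\theta^{-1}$ and $\nabla_\phi G_\phi^{-1}$. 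The critical observation is that evaluating both quantile functions at the \emph{same} $u$ produces the comonotonic coupling, so that after substituting $u = F_\theta(X^\theta)$ (equivalently $u=G_\phi(X^\phi)$) the pair $(F_\theta^{-1}(u),G_\phi^{-1}(u))$ becomes exactly $(X^\theta,X^\phi_c)$ from Theorem \ref{thm: inner}. Combining these contributions with the $\Lambda$ factor above and bundling the derivatives of both CDFs into a single expectation gives the stated formula.

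The main technical obstacle is bookkeeping: because $X^\theta=H_\theta(X^\phi)$, the variable $\phi$ appears in $F_\theta$ (via the randomness feeding $H_\theta$), in $G_\phi$, and inside the comonotonic pair $X^\phi_c$, and the resulting expression mixes $\nabla_\theta F_\theta$ and $\nabla_\phi G_\phi$ evaluated at distinct random points $X^\theta$ and $X^\phi$. One must be careful that the $\nabla_\theta F_\theta$ term that appears in the outer-gradient formula is actually the $\phi$-derivative acting through the $X^\phi$ argument of $H_\theta$ — reducing to a $\theta$-derivative of $F_\theta$ because $H_\theta$ is smooth and monotone in its scalar input when the coupling is comonotonic. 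Once this identification is made and the indicator/positive-part differentiation is justified (e.g.\ on the set $\{d_p\neq\epsilon\}$, which has full measure in typical training iterates), the formula assembles directly.
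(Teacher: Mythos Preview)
Your approach is correct and matches what the paper indicates: it does not give a detailed argument but simply states that the formula follows from the quantile-derivative trick of Theorem~\ref{thm: Non-Rob Grad} together with the results in \cite{jaimungal2022robust}, and your sketch (differentiate the RDEU term via the inverse-cdf identity, differentiate the Wasserstein penalty via its one-dimensional quantile representation, then use the comonotonic coupling to identify $(F_\theta^{-1}(u),G_\phi^{-1}(u))$ with $(X^\theta,X^\phi_c)$) is exactly that program carried out.

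One caveat on your final paragraph: you correctly flag the appearance of $\nabla_\theta F_\theta$ (rather than $\nabla_\phi F_\theta$) in the penalty term as something requiring care, but the rationalisation you offer --- that the $\phi$-derivative ``reduces to a $\theta$-derivative of $F_\theta$ because $H_\theta$ is smooth and monotone'' --- does not hold. The $\phi$-gradient of $F_\theta$ tracks how the distribution of $X^\phi$ shifts with $\phi$, which is unrelated to how $H_\theta$ varies with $\theta$; monotonicity of $H_\theta$ in its scalar argument does not convert one into the other. Your own derivation naturally produces $\nabla_\phi F_\theta$ at that spot, and that is what belongs there; the $\nabla_\theta$ in the displayed formula is most plausibly a typographical slip rather than a substantive step you need to justify.
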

The proof of these formula follow utilises Theorem \ref{thm: Non-Rob Grad} coupled with the results from \cite{jaimungal2022robust}.
Similar to the previous section, we use kernel density estimation to approximate $F_\theta(x)$ and $G_\phi(x)$ and use sample means in place of the expectation.

\section{Numerical Experiments}
\label{sec:results}

\subsection{The Hedged Options}

In this section, we investigate the various hedging strategies of barrier call options under the $\alpha-\beta$ risk measures defined in Section \ref{sec: special}. In particular, let $B$ represent the barrier, and $K$ represent the strike price. Then the knock-in and knock-out option payoffs are
\begin{align*}
V^{\text{knock-in}}_T &= \left(S_T -K \right)_+ \mathbb{I}\left(\min_{t<T} \{S_t\} < B\right), \qquad \text{and}
\\
V^{\text{knock-out}}_T &=
\left(S_T -K \right)_+ \mathbb{I}\left(\min_{t<T} \{S_t\} > B\right).
\end{align*}
In our experiments, the option expires at $T = 1$, $B = 8.5$, and $K = 10$. 

\subsection{The Market Model}

Asset prices $S$ are modelled using the Heston model 
\begin{equation}
    \label{eqn:Heston 1}
    dS_t = \mu \, S_t\, dt + \sqrt{\nu_t}\,S_t\, dW_t^S\,,
\end{equation}
where $\nu$ is the instantaneous variance satisfying the SDE
\begin{equation}
\label{eqn: Heston 2}
    d\nu_t = \kappa(\theta - \nu_t)\, dt + \xi \,\sqrt{\nu_t} \,dW_t^\nu,
\end{equation}
and where $(W_t^S,W_t^\nu)_{t\ge0}$ are correlated Brownian motions under the physical measure $\mathbb{P}$.
The model parameters carry the following interpretations: $S_0$ (initial asset price), $v_0$ (initial variance), $\mu$ (drift), $\kappa$ (the rate at which $v_t$ reverts to the long run average), $\theta$ (long run average variance), $\eta$ (the volatility of volatility), and $\rho$ (the correlation between $W^S$ and $W^\nu$).

We stimulate the Heston model using $200$ time steps, however, the agent is only allowed to trade once every 4 steps, i.e. at $(t_i)_{i=0,\dots, 49}$ where $t_i := \frac{4\,i}{200}$. The model parameters in the simulation are as follows
\[
S_0 = 10, \quad v_0 =0.3^2, \quad \mu = 0.08, \quad \kappa = 3, \quad \theta = 0.3^2,
\quad \eta = 2, \quad \text{and} \;\;\rho=-0.5.
\]
When comparing the robust and non-robust models, we simulate prices from the Heston model, but where some parameters are modified.

\subsection{The Black-Scholes Model}

To evaluate the performance and interpret our neural network results, we compare to classical results derived from the Black-Scholes model. Recall that the model assumes asset prices are geometric Brownian motions. In particular, $S$ satisfies the SDE
\[
dS_t = \mu S_t dt + \sigma S_t dW_t.
\]
Strictly speaking, the Black-Scholes hedge is not for the purpose of minimising a risk measure, nor does its derivation include transaction costs, however, the resulting hedge serves as a useful benchmark for evaluating our strategies. 

It is well known that, when interest rates are zero, the price of a European call option at time $t$ in the Black-Scholes model is given by
\[
C_{BS} (S, K) = S \,\Phi(d_+) - K\,\Phi(d_-),
\]
where $d_\pm= \big(\log \frac{S}{K} \pm \frac{\sigma^2}{2} (T-t)\big)/\sigma \sqrt{T-t}$. 
As well, the price of knock-in and knock-out options, before asset prices touch the barrier are provided, respectively, by \cite{Westermark}
$$ C_{\text{Knock in}} = \frac{S}{B} C_{BS} \left( \frac{B^2}{S}, K \right) \qquad \text{and} \qquad 
C_{\text{Knock out}} = C_{BS} - C_{\text{Knock in}}$$
For knock-in calls, once the barrier is breached, the option becomes a regular call and has value $C_{BS} (S, K)$. For knock-out calls, once the barrier is breached, the option is worthless.
For knock-in calls, the Black-Scholes delta hedge is given by
\begin{align*}
    \Delta_{\text{Knock-in}}(S) = \frac{1}{B} C_{Call} \left( \frac{B^2}{S}, K \right) - \frac{B}{S} \Delta_{Call} \left( \frac{B^2}{S}, K \right) \,.
\end{align*} 
The Black-Scholes delta hedge for a knock-out call is $\Delta_{\text{Knock-out}}(S) = \Phi(d_+)-\Delta_{\text{Knock-in}}(S)$.
In our experiments, when we hedge using the Black-Scholes hedge, we do so at the same set of discrete hedging times that the RL agent uses. Moreover, we use the  $\sigma$ estimated by stimulating price paths $S_t$ under Heston to generate a distribution for $\log(S_T)$ and computing its standard-deviation divided by $\sqrt{T}$. Thus, obtaining a Black-Scholes model that matches the variance of the Heston model.

\subsection{Hedging Features}
We now consider the agent's problem as posed in Section \ref{sec: problem}. The strategy,  $\phi = (\Delta_i)_{i = 0, 1 ... N-1}$, is parametrised by a fully connected feed forward neural network with 5 hidden layers each with 35 neurons and SiLU activation functions\footnote{The SiLU activation function is given by $f(x)=x/(1+e^{x})$}. The output layer has a $\tanh$ activation function that clips strategies to the range $\Delta_i\in[-2,2]$ to ensure the agent does not over-leverage. 
For test cases with no transaction costs, the neural net uses the following five features
$$\left\{t,\, S_t, \,M_t, \,\mathbb{I}\!\left[M_t > B\right]\right\}.
$$ 
where $M_t:=\min_{u\le t} \{S_u\}$.
When we incorporate transaction costs, we add one more feature: the previous holdings $\Delta_{t-1}$.

While $\mathbb{I}[M_t < B]$ can be determined through $\min_{u<t} \{S_u\}$, each serve a different purpose in the neural network. In most cases, the optimal strategies have a jump discontinuity once the barrier is breached, but, as activation functions are smooth, neural networks have difficulties modelling such discontinuities. Thus, we include the indicator of whether the barrier was breached or not. We include the minimum asset price up to that point in time, as under the $\alpha-\beta$ risk measures its effect on optimal strategies is a priori unclear. Under the Black-Scholes model, e.g., only time, price, and the indicator matters.  

For the robust problem, a second neural net distorts $X^\phi$ and it is parametrised by a fully connected feed forward neural network with 1 hidden layer of 10 neurons and SiLU activation functions. It takes in a realisations of $X^\phi$ and outputs realisations of $X^\theta$. For all experiments, we use the Wasserstein-$1$ distance and set the radius of the ball to $0.02$.

\subsection{CVaR}
\label{sec: Cvar}
In this section, let $\alpha = 0.2$, $p = 1$. For these parameters, the $\alpha-\beta$ risk measure reduces to $\text{CVaR}_{0.2}$. 

\subsubsection{No Transaction Costs}
We first set transaction costs to zero to illustrate a few interesting effects that are present even in the case without transaction costs.
\begin{figure}[H]
  \centering
  \includegraphics[width=\textwidth]{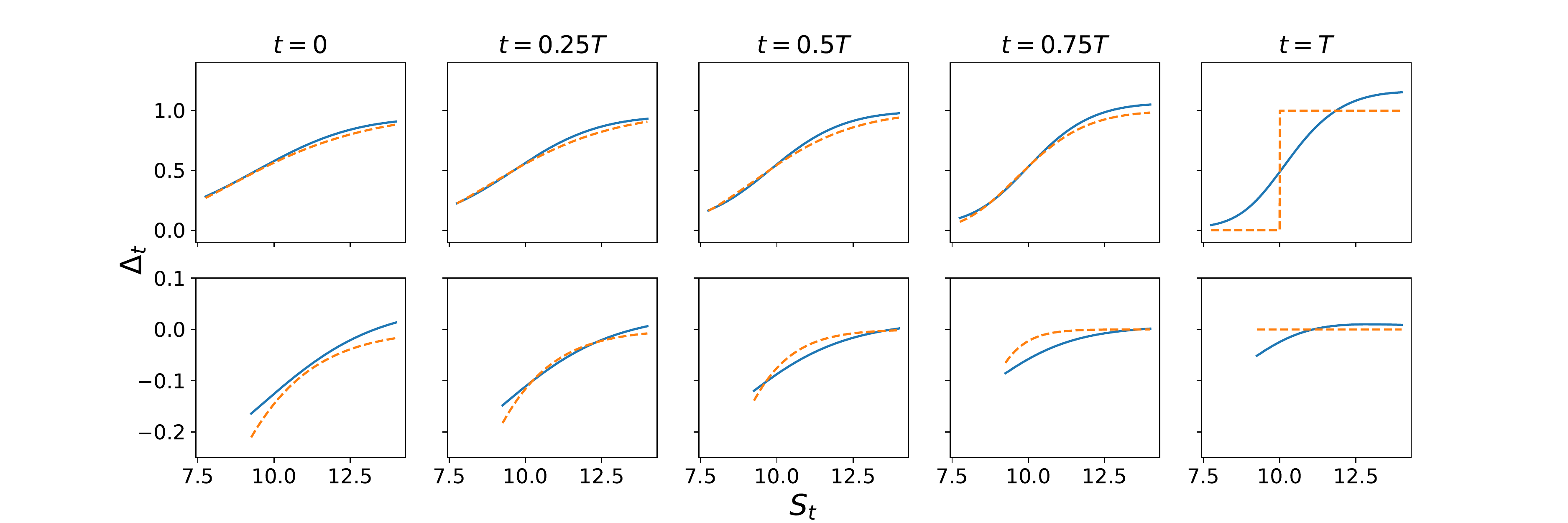}
  \caption{A comparison between trained non-robust strategies (blue) and Black-Scholes delta hedging (orange) for the knock-in call option. The top panel shows post-knock in strategies, while the bottom panels shows pre-knock in strategies. }
  \label{fig:In CVaR non robust}
\end{figure}
\begin{figure}[H]
  \centering
  \includegraphics[width=\textwidth]{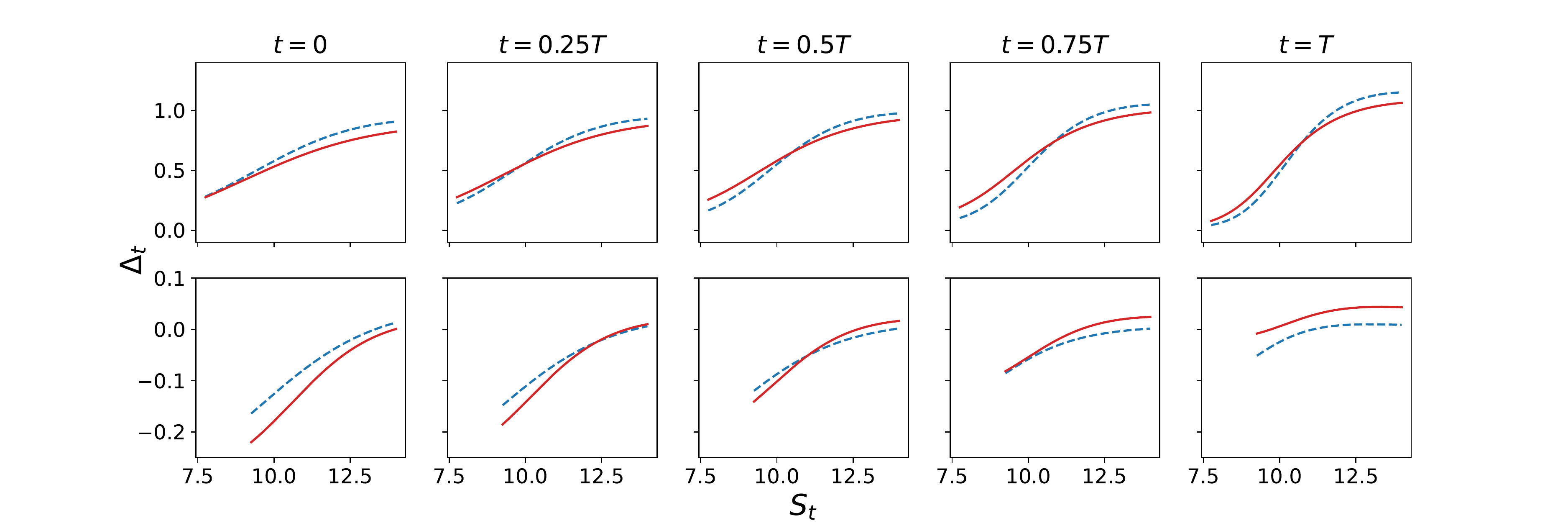}
  \caption{A comparison between robust (red) and non-robust (blue) strategies for the knock-in call option. The top panel shows post-knock in strategies, while the bottom panels shows pre-knock in strategies. }
  \label{fig:In CVaR robust}
\end{figure}

Figures \ref{fig:In CVaR non robust} and \ref{fig:In CVaR robust} shows the optimal strategy learned by the agent. The top panels show the optimal hedge post knock-out (i.e., once the barrier is breached), while the bottom panels show the pre-knock-out (i.e., before the barrier is breached) hedges. While $M_t$ is a feature, the strategies for CVaR only depend on $\mathbb{I}[M_t < B]$ but do not depend on the exact value of $M_t$. In the lower panels, the hedges stop at $M_t= 8.5=B$ corresponding to the option barrier. Figure \ref{fig:In CVaR non robust} shows that, for CVaR, the non-robust strategy are qualitatively and quantitatively fairly similar to the Black-Scholes' delta-hedging. Moreover, Figure \ref{fig:In CVaR robust} shows that, for CVaR, the robust and non-robust are similar, but there are some notable differences. The robust agent does hedges more when the option is out-of-the-money and less when the option is in-the-money. By replicating the option payoff less closely, the agent is better able to handle market uncertainties. 

Figure \ref{fig:Out CVaR} shows the analogous strategies for knock-out call options for the robust and non-robust cases. They are very similar to Black-Scholes hedges (not shown). Naturally, the agent only hedges when the option has not yet been knocked-out, otherwise, once it is knocked-out, the agent holds no position in the robust case, but holds a small positive position in non-robust case. This is because the agent is hoping to gain some positive return from the asset.

\begin{figure}[H]
  \centering
  \includegraphics[width=\textwidth]{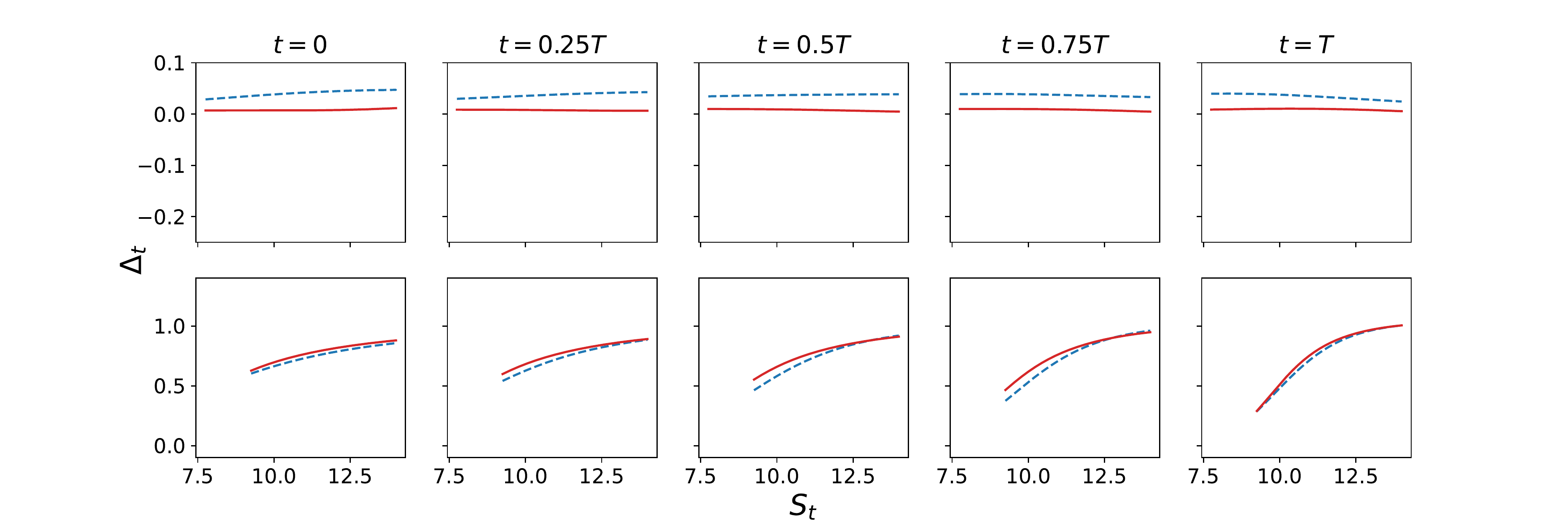}
  \caption{A comparison between robust (red) and non-robust (blue) strategies for the knock-out option. The top panel shows post-knock out strategies, while the bottom panels shows pre-knock out strategies. }
  \label{fig:Out CVaR}
\end{figure}

\subsubsection{The Effect of Transaction Costs}
\label{section: CVaR Rob}

Next, we introduce transaction costs and set $c = 0.01$ in Equation \ref{eqn:wealth}. In this case, it is challenging to visualise the strategy as a function of features, as the strategy now becomes a function of $t$, $S_t$, $M_t$, and $\Delta_{t-1}$.
Instead, we depict the optimal strategy for a few sample paths. To this end, Figure \ref{fig:CVar Trans paths} shows the optimal hedge along two sample paths of the asset process -- both with and without the transaction costs. The two strategies co-move. The right most panel shows a histogram of total variation of hedge position across 5000 paths
As the figure shows, the case with transaction costs has less total variation, indicating that the agent trades less in this case. Both total variations have spikes near zero because if the barrier is never breached the agent trades very few shares.
\begin{figure}[H]
     \centering
     \begin{subfigure}{0.6\textwidth}
         \centering
         \includegraphics[width=\textwidth]{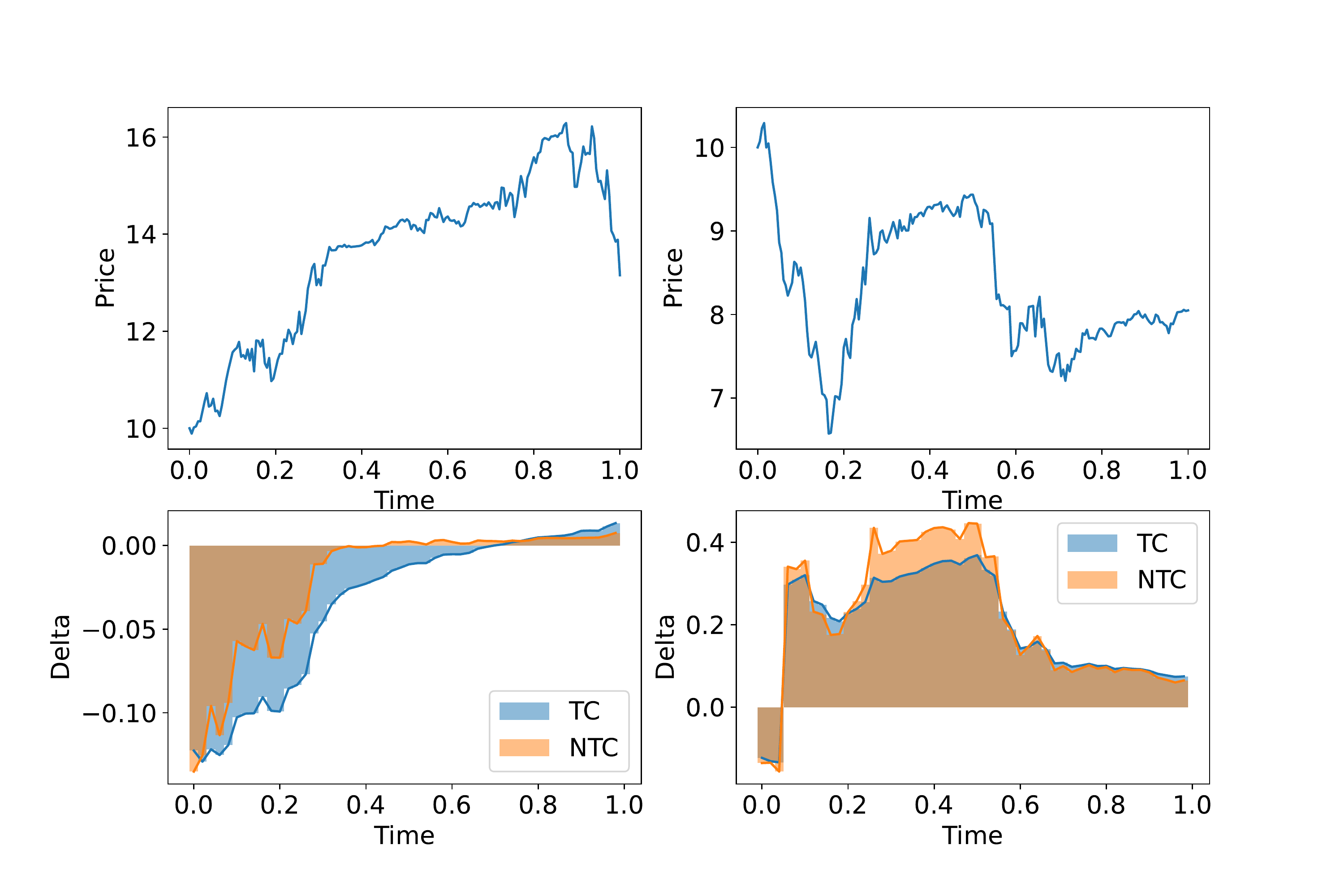}
         \caption{Optimal $\Delta_t$ for two sample asset paths.}
     \end{subfigure}
     \hspace{0.5cm}
     \begin{subfigure}{0.3\textwidth}
         \centering
         \includegraphics[width=\textwidth]{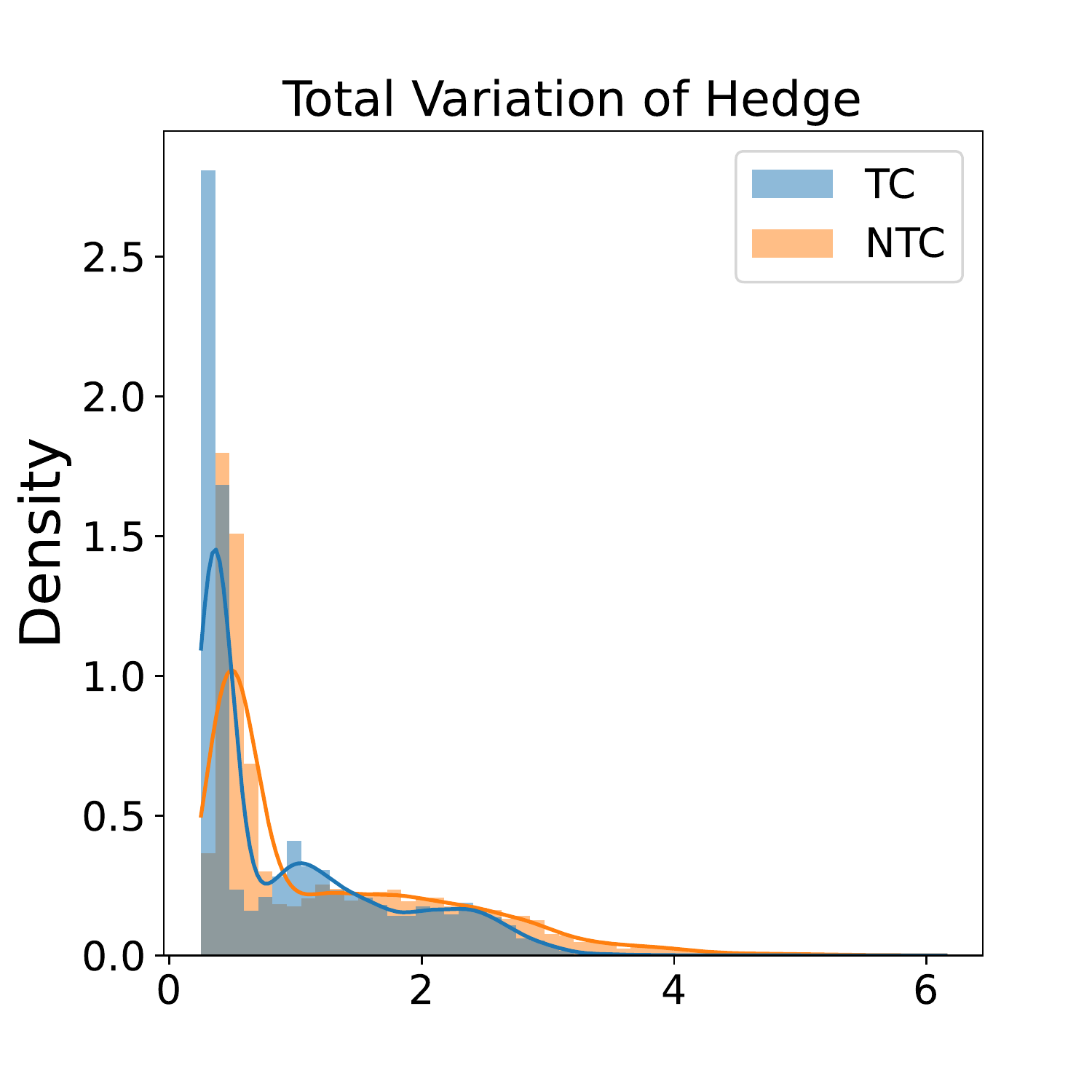}
         \caption{Histogram of total variation of $(\Delta_t)_{t = 0, 1 \dots N-1}$
         across multiple asset paths.}
     \end{subfigure}
        \caption{Strategies for hedging the down-and-in call option with transaction costs (TC) and with no transaction costs (NTC) built into the model.}
        \label{fig:CVar Trans paths}
\end{figure}

Figure \ref{fig:CvaR compare} shows a comparison of P\&L between the robust strategy and the Black-Scholes option hedging strategy, under uncertainty. All training price paths are generated using the parameters $S_0 = 10, v_0 =0.3^2, \mu = 0.08, \kappa = 3, \theta = 0.3^2, \eta = 2$, and $\rho=-0.5$. For the Black-Scholes models, we ignore the transaction cost and estimate $\sigma$ using the method described above. Furthermore, we assume that the model was misspecified with $\kappa = 1$ and $\rho = -0.1$ in reality. When there is model uncertainty, the robust strategy outperforms in both cases.
\begin{figure}[H]
     \centering
     \begin{subfigure}{0.42\textwidth}
         \centering
         \includegraphics[width=\textwidth]{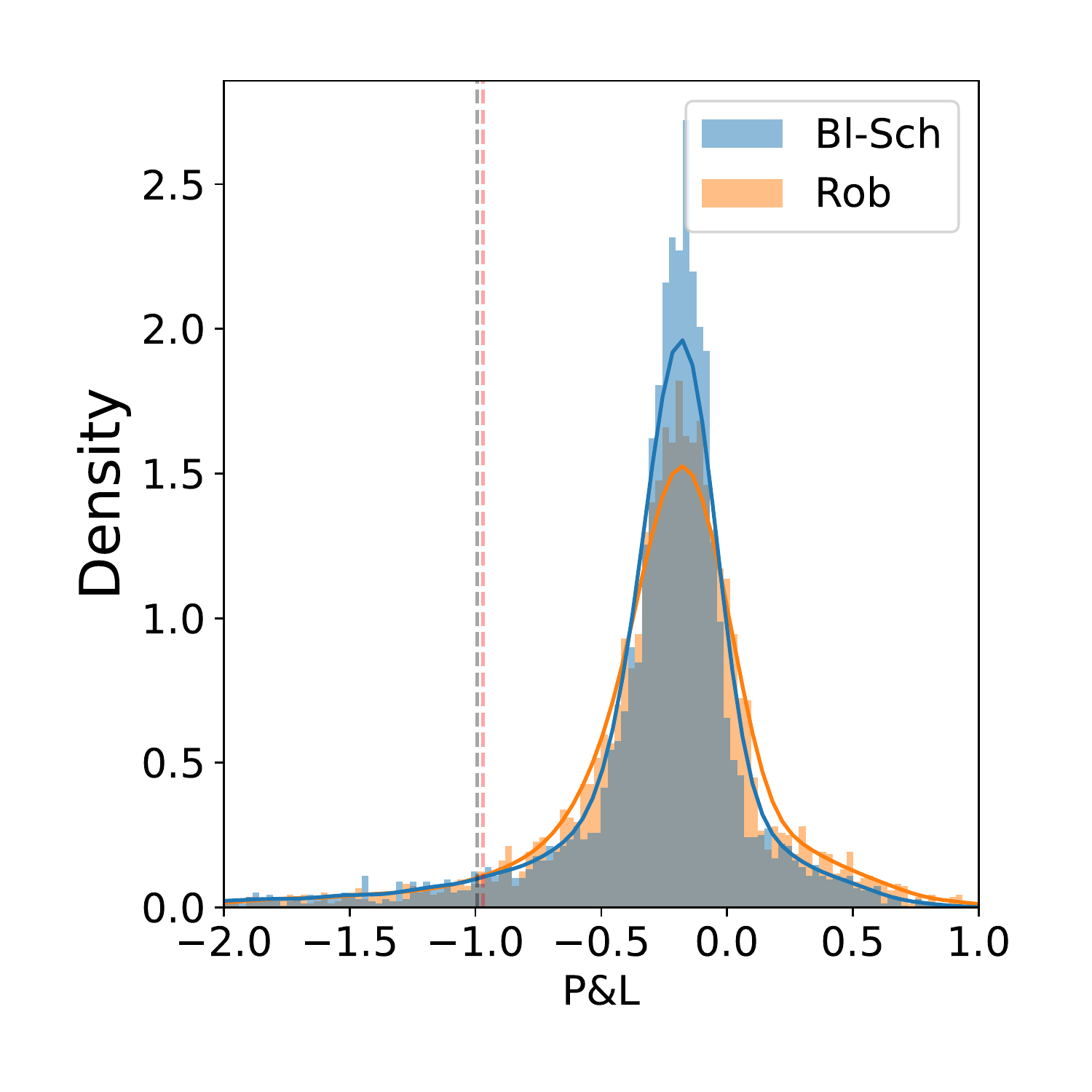}
         \caption{Knock-in Option}
     \end{subfigure}
     \begin{subfigure}{0.42\textwidth}
         \centering
         \includegraphics[width=\textwidth]{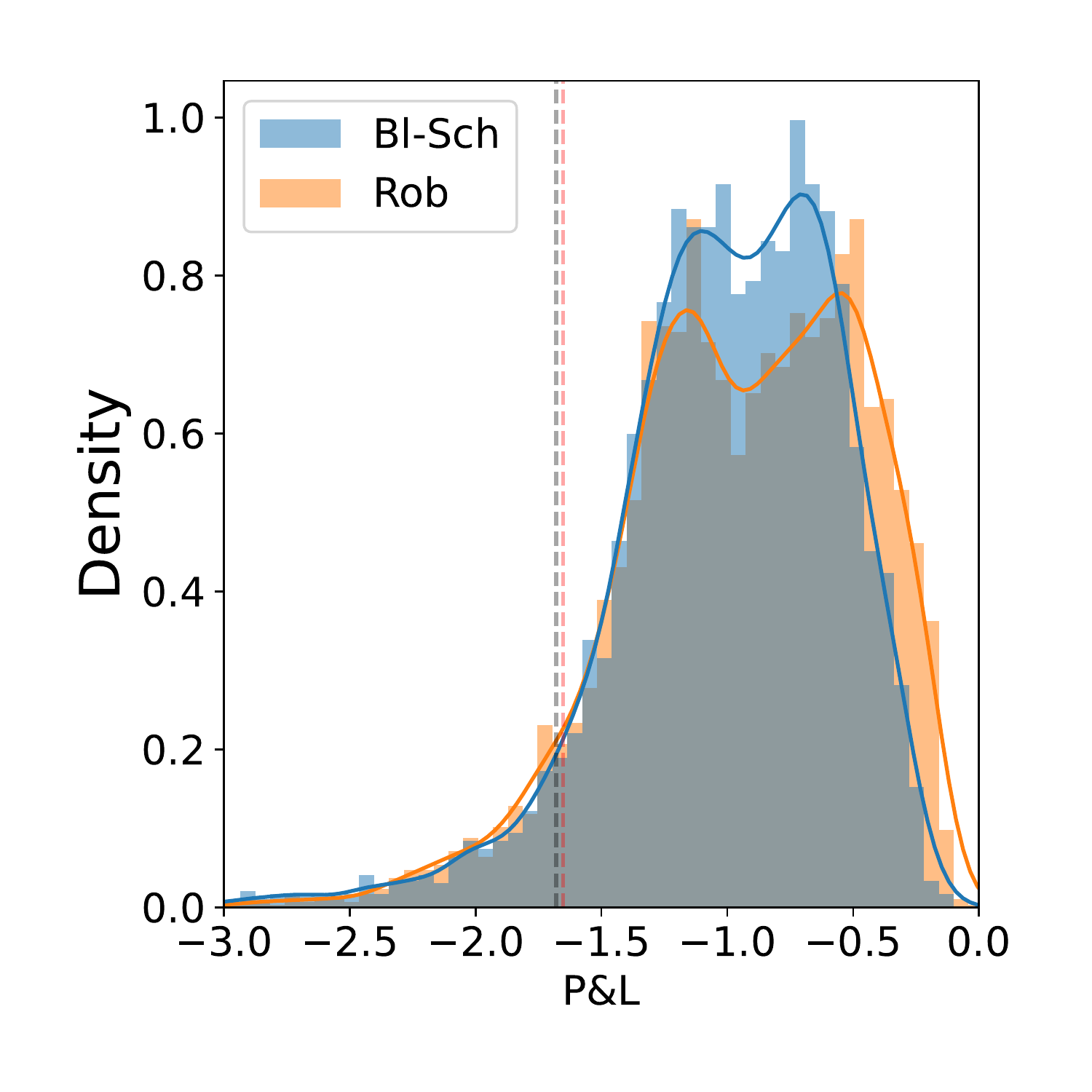}
         \caption{Knock-out option}
     \end{subfigure}
        \caption{Comparison of Black-Scholes option hedging (Bl - Sch) and robust (Rob) P\&L with transaction cost when model is misspecified with $\kappa = 1$ and $\rho = -0.1$ in reality. CVaR$_{0.2}$ is also shown.}
        \label{fig:CvaR compare}
\end{figure}

\subsubsection{Pricing}
The methodology we develop may also be used to provide a robust price for options. Suppose the agent wishes to target a $\text{CVaR}_{0.2}$ of $-0.5$. As CVaR is translation invariant, after learning a strategy, the agent only needs to adjust the initial option price to reach the required target -- the optimal strategy will not be altered -- the result price is denoted $P^{Rob}$.

Next, we compare the price induced by the robust strategy to the price induced by the Black-Scholes strategy. We cannot, however, simply use the Black-Scholes option price as it does not target a CVaR, nor does it include transaction costs. Instead, we generate the P\&L of the Black-Scholes and adjust the price until the P\&L$\text{CVaR}_{0.2} = -0.5$ --- we denote this price by $P^{BS}$. As a benchmark comparison, we use the classical Black-Scholes barrier option price -- denoted $C^{BS}$.

Table \ref{table: pricing} shows the effect of robust pricing. We train our model with the same parameters as Section \ref{sec: Cvar} and assume that in reality $\kappa = 1$ and $\rho = -0.1$ (instead of $\kappa = 3$ and $\rho = -0.5$). The agent, however, has no way of knowing the true $\sigma$, and must price using the misspecified model. From our results, targeting $\text{CVaR}_{0.2} = -0.5$ is a reasonable target, and the generated prices are close to $C_{BS}$. As the table shows, the classical Black-Scholes price is the lowest,  the price induced by the Black-Scholes hedge (but, accounting for transaction costs, and s.t. $CVaR=-0.5$) is higher, and finally, as expected, the robust price is the highest.
\begin{table}[H]
\centering
\begin{tabular}{@{}p{0.12\textwidth}*{4}{L{\dimexpr0.22\textwidth-2\tabcolsep\relax}}@{}}
\toprule\toprule
& \multicolumn{2}{c}{Knock-in Call} &
\multicolumn{2}{c}{Knock-Out Call} \\
\cmidrule(r{4pt}){2-3} \cmidrule(l){4-5}
&  Price & CVaR$_{0.2}$ & Price & CVaR$_{0.2}$\\
\midrule
$P^\text{Rob}$  & 0.32258 & $-0.68025$ & 1.16839 & $-0.51235$ \\
 $P^{BS}$  & 0.31434  & $-0.74954$ & 1.11398 & $-0.60988$ \\
 $C^{BS}$  & 0.27300  & - & 0.98123 & - \\
\bottomrule\bottomrule
\end{tabular}
\\[0.5em]
\caption{Comparison of different pricing schemes. The left column under each header shows the price outputted by the misspecified model targeting CVaR$_{0.2} = -0.5$. The right column shows the CVaR of a strategy charging their respective price under the actual model. }
\label{table: pricing}
\end{table}

\subsection{\texorpdfstring{$\alpha$-$\beta$}{a-b} risk measures}
\label{sec: ab}

We next consider $\alpha-\beta$ risk measures with $\alpha = 0.1$, $\beta =0.9$, and $p = 0.7$ -- see the middle panel of Figure \ref{fig:gamma}. This corresponds to an agent who cares about gains to losses at a ratio of $3 : 7$.

\subsubsection{No Transaction Costs}
\begin{figure}[H]
  \centering
  post knock-in \hspace{11em}
  pre knock-in
  \\
  \includegraphics[width=0.85\textwidth]{}
  \caption{Non robust hedging strategies for a down-and-in call option. The graph show total asset holdings at t = 0.5T.}
  \label{fig: ab in non robust}
\end{figure}
First we consider no transaction costs. Figure \ref{fig: ab in non robust} shows the hedging portfolios for a knock-in call in the non-robust case. These strategies are best described as a combination of asset trading and option hedging. When the option is not yet knocked-in, the agent ignores the embedded call option and invests in the asset. In this case, when the asset price is significantly above the barrier, they long the asset, but the strategy has a built in stop-loss. As the minimum asset price drops, they start unwinding their position. If prices drop to around \$9, for example, the agent refrains from taking a long position in the asset for the remainder of the session. This strategy allows the agent to take advantage of the positive drift in prices, but also places a cap on the losses incurred from trading the asset. If the barrier is breached, the agent quickly adjusts to option hedging. 

\begin{figure}[H]
  \centering
    post knock-in \hspace{11em}
  pre knock-in
  \includegraphics[width=0.85\textwidth]{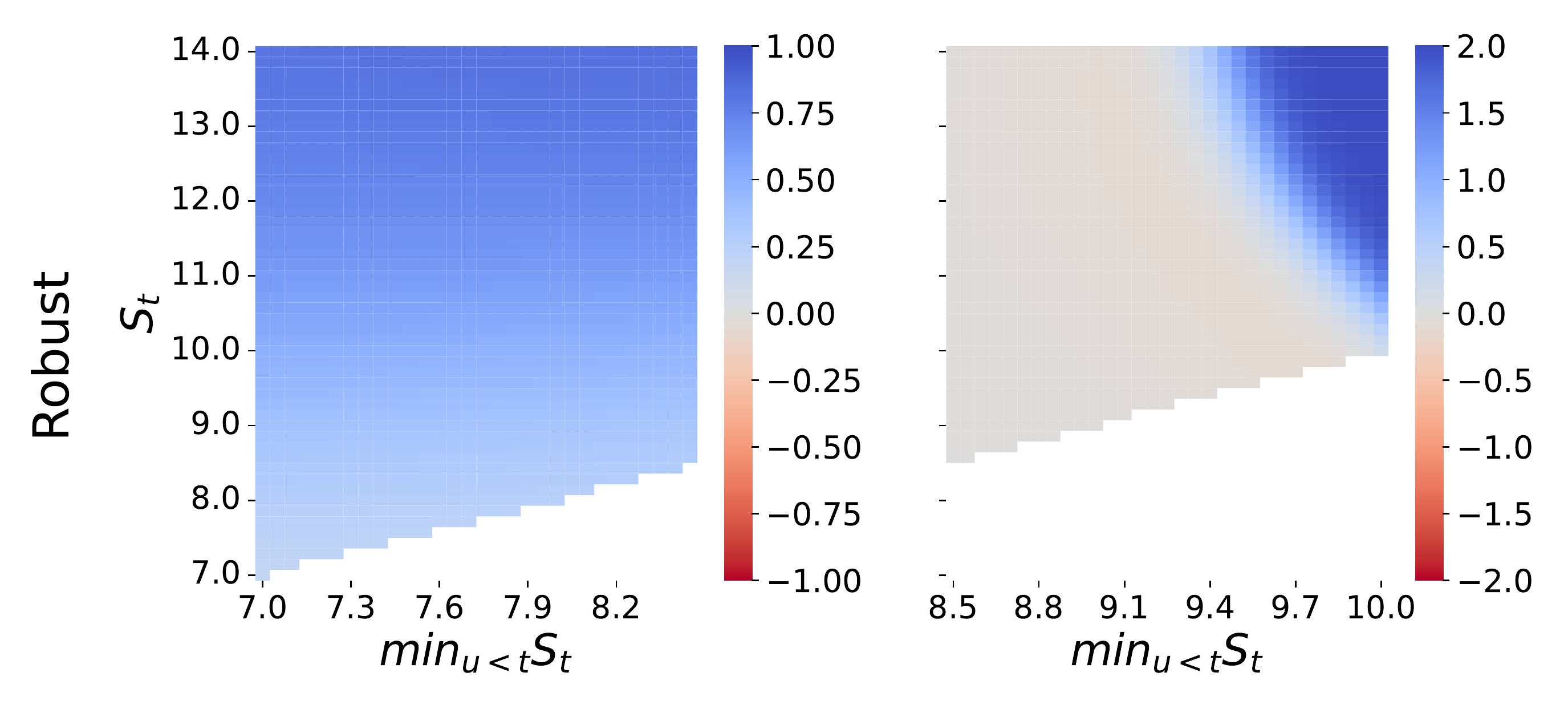}
  \caption{Robust hedging strategies for a down-and-in call option. The graph show total asset holdings at t = 0.5T}
  \label{fig: ab in robust}
\end{figure}
Figure \ref{fig: ab in robust} shows the corresponding robust strategy. In this case, the agent becomes more conservative in seeking gains. They begin by buying a small number of shares. If the stock price rises, the agent has made a profit and seeks additional gains by purchasing more shares. Compared with the non-robust case, the agent sets a higher stop loss and generally exits from the long position earlier. Similar to the CVaR cases, when the option is knocked-in, the robust strategy holds less in the stock than the non-robust strategy in an attempt to mitigate model uncertainty.

\begin{figure}[H]
  \centering
  post knock-out \hspace{9em}
  pre knock-out
  \\
  \includegraphics[width=0.85\textwidth]{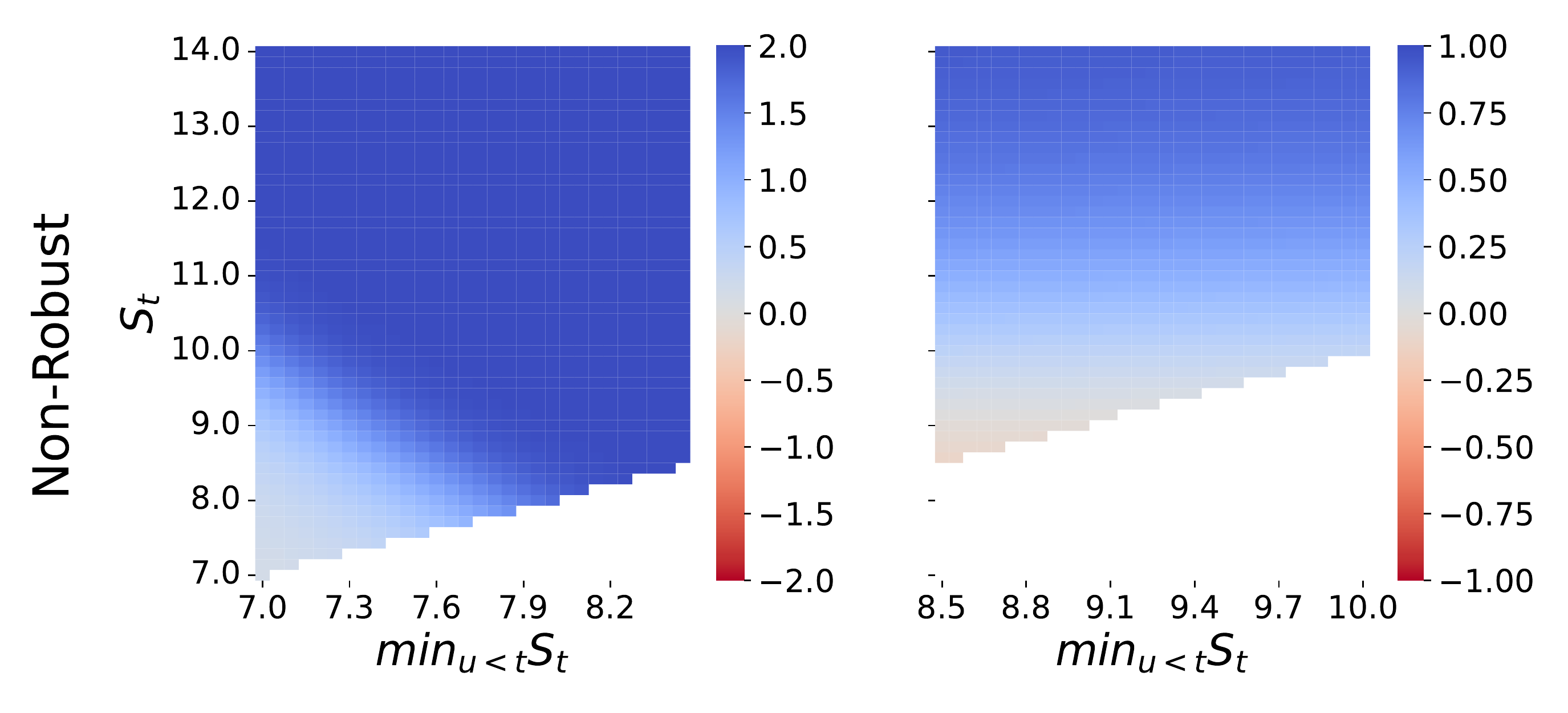}
  \caption{Non robust hedging strategies for a down-and-out call option. The graph show total asset holdings at t = 0.5T}
  \label{fig: ab out non robust}
\end{figure}
Figure \ref{fig: ab out non robust} shows analogous results for the knock-out call. When the option is knocked-out (the left panel), the strategy is to go long the asset, but as in the previous examples, it is paired with a stop-loss strategy. When we robustify the strategy, as shown in \ref{fig: ab out robust}, the risk in asset trading when knocked-out is greater than the agent can bear, and they simply hold nothing until the end of the trading horizon. In the region before being knocked-out, however, the agent focuses on option hedging. This is dramatic change is due to a phase transition that occurs in investor's behaviour as they change the balance between what is more important gains or losses. We discuss this phenomenon in more detail in Section \ref{sec: phase transition}.
\begin{figure}[H]
  \centering
  post knock-out \hspace{9em}
  pre knock-out
  \\
  \includegraphics[width=0.85\textwidth]{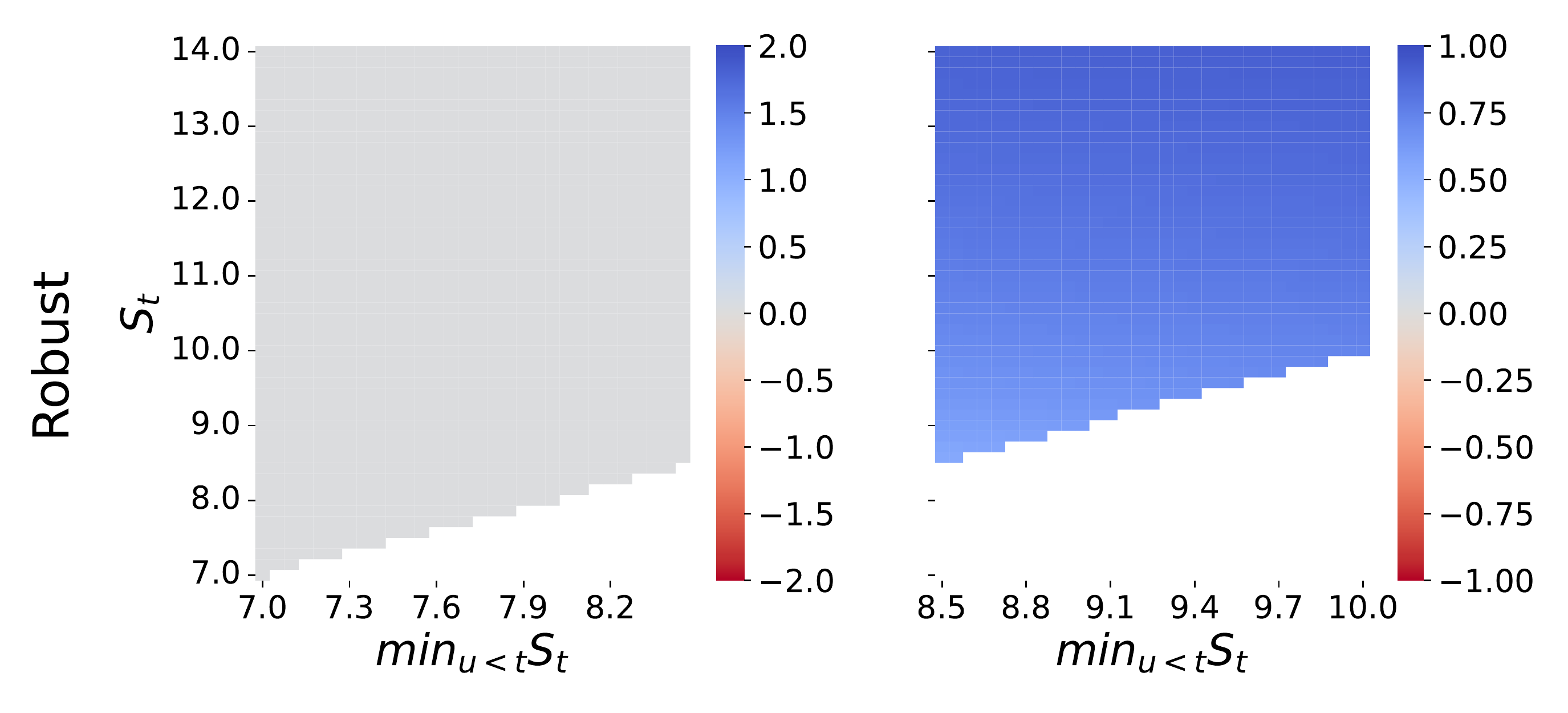}
  \caption{Robust hedging strategies for a down-and-out call option. The graph show total asset holdings at t = 0.5T}
  \label{fig: ab out robust}
\end{figure}

\subsubsection{Incorporating Transaction Costs and Robustness}

Next, we consider transaction costs of $c = 0.01$. In Figure \ref{fig:ab variaton}, we show the total variation of the position and compare with the no transaction case. Overall, when the agent incorporates transaction costs, they favour strategies with less variance. For example, in the knock-in case, the no-transaction-cost agent purchases close to 2 shares at $t = 0$. This strategy results in very little transactions when prices rise, however, it performs poorly performing when prices fall because the agent looses from both transaction costs and asset value. Instead, the transaction-cost agent, chooses to buy less shares at t = 0, and adjust shares as needed.  Compared to CVaR risk measures, introducing transaction costs has a larger impact on the strategies. For both knock-in and knock-out options, the agent avoids paying the upper tail risk due to the transaction costs.
\begin{figure}[H]
     \centering
     \begin{subfigure}{0.45\textwidth}
         \centering
         \includegraphics[width=\textwidth]{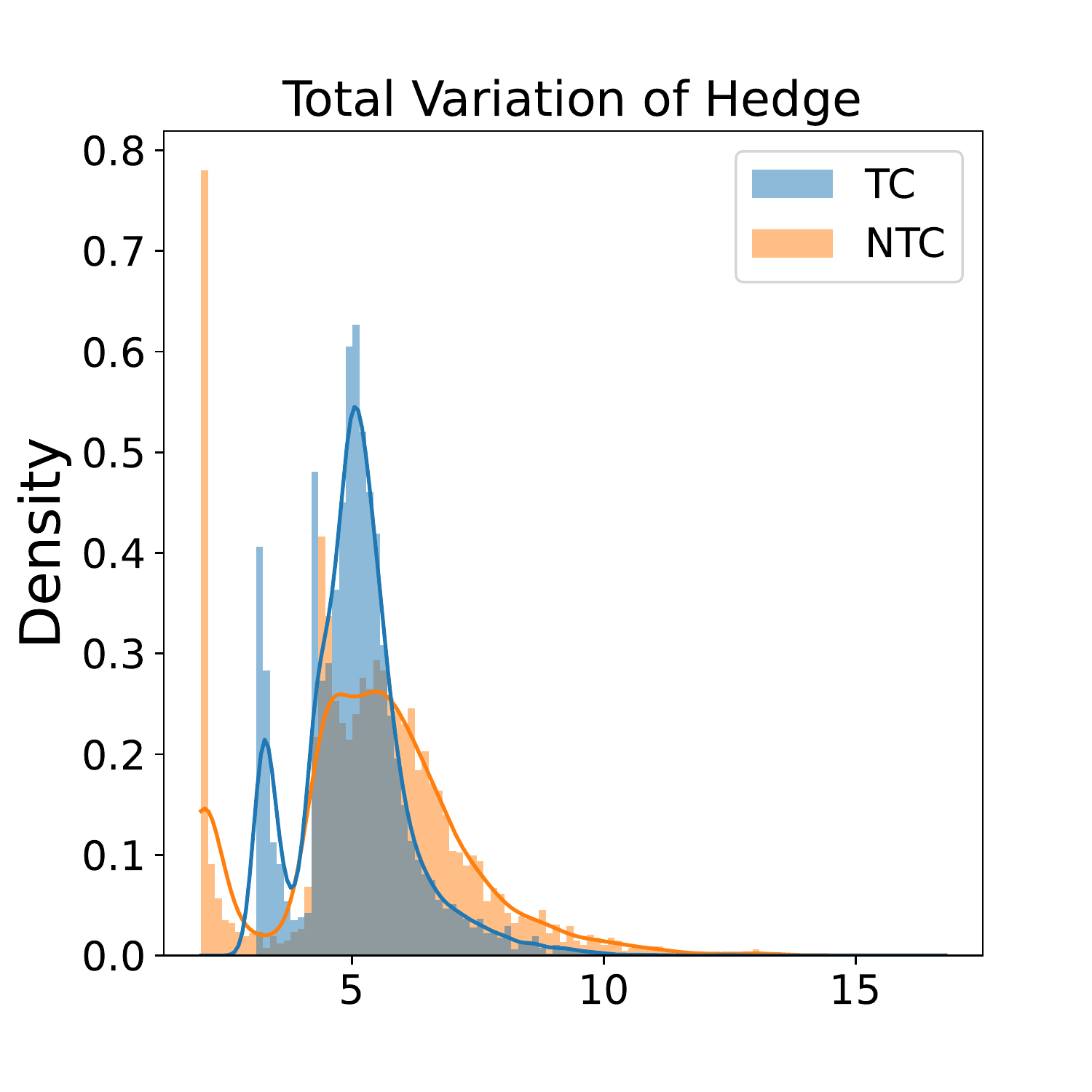}
         \caption{Knock-in option}
     \end{subfigure}
     \hspace{0.5cm}
     \begin{subfigure}{0.45\textwidth}
         \centering
         \includegraphics[width=\textwidth]{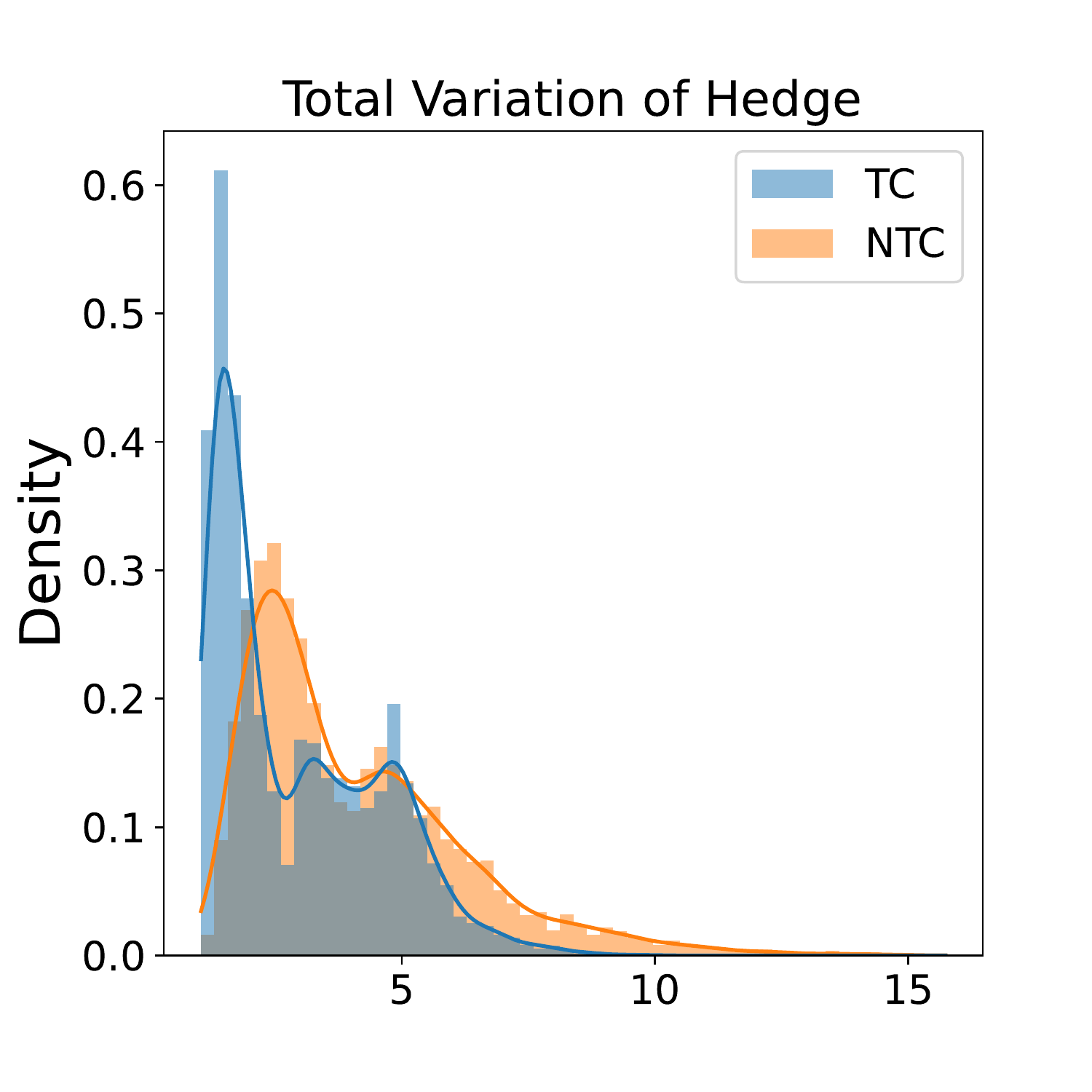}
         \caption{Knock-out option}
     \end{subfigure}
        \caption{Comparison of total Variation of the hedge position across different asset price paths in cases with transaction costs (TC) and without transaction costs (NTC).}
        \label{fig:ab variaton}
\end{figure}

Figure \ref{fig:ab compare} compares the robust and non-robust strategies. Under our framework, uncertainty need not arise from the asset price process. For example, the strategy also protects against changes in transaction costs. Models were trained using the same parameters as before. However, for testing, we increase transaction costs to $c = 0.07$. In both cases the robust strategy forgoes the upper tail gain in order to minimise losses. Since losses are more valued than gains, the agent is better off in terms of the risk measure.

\begin{figure}[H]
     \centering
     \begin{subfigure}{0.45\textwidth}
         \centering
         \includegraphics[width=\textwidth]{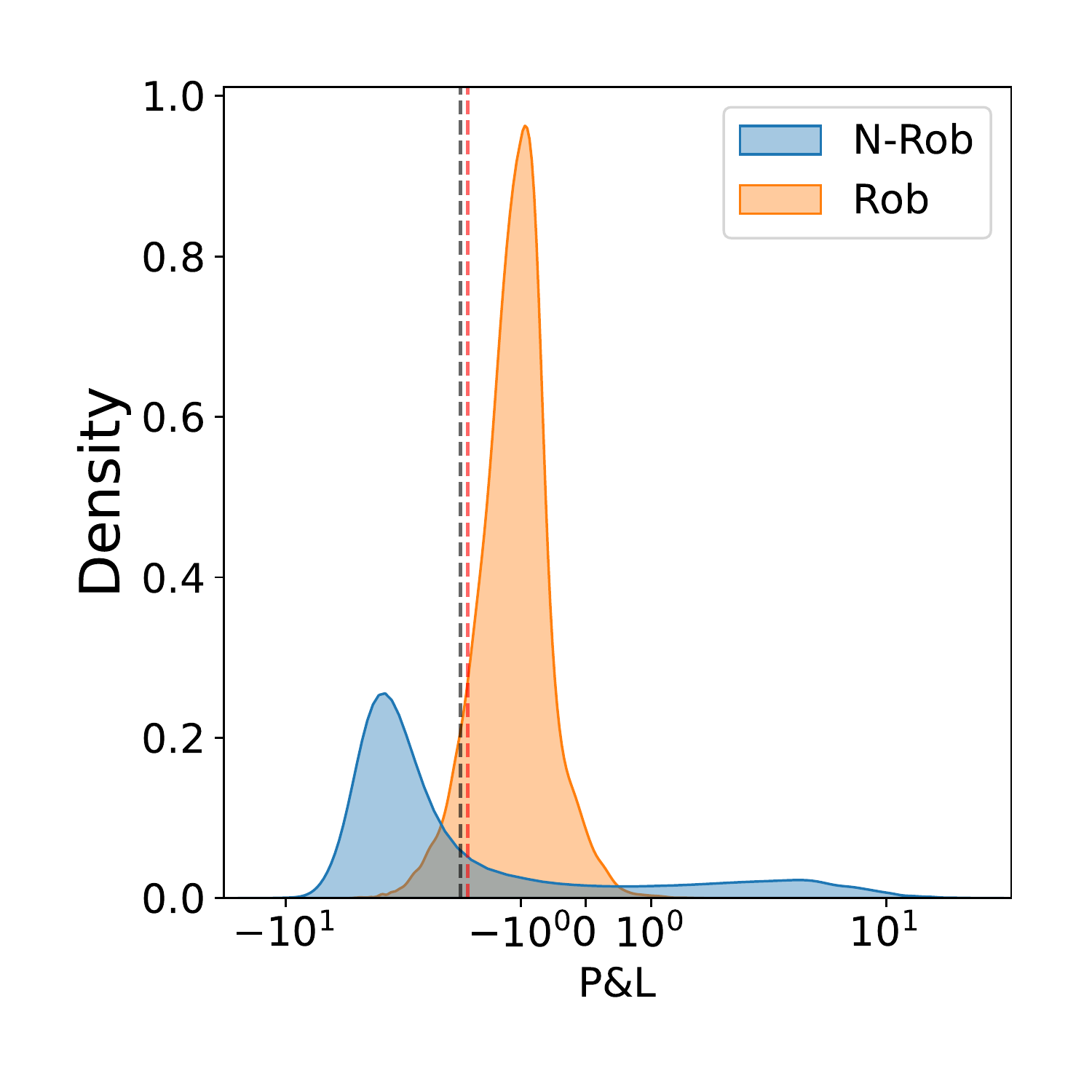}
         \caption{Knock-in Option}
     \end{subfigure}
     \hspace{0.5cm}
     \begin{subfigure}{0.45\textwidth}
         \centering
         \includegraphics[width=\textwidth]{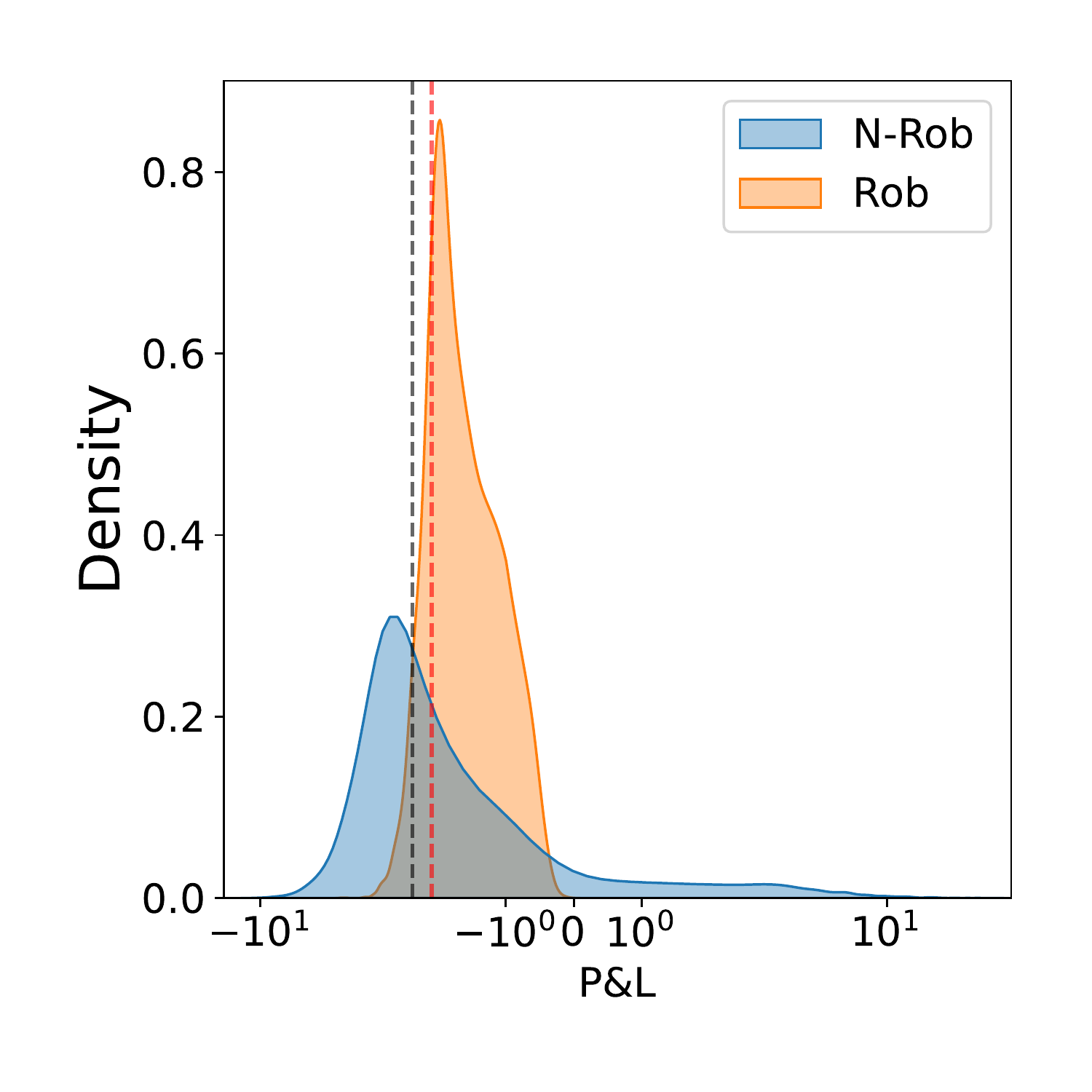}
         \caption{Knock-out option.}
     \end{subfigure}
        \caption{Comparison of non-robust(N-Rob) and robust (Rob) strategies. $-\R^U_g [X^\theta]$ is shown as dotted lines in both plots. To help with visualisations, the x-axis uses a symmetric logarithmic scale, with values inside $[-1,1]$ displayed on a linear scale}
        \label{fig:ab compare}
\end{figure}

\subsection{Phase Transition due to Agent's Preferences}
\label{sec: phase transition}
In this section, we examine how the agent's risk preferences affect the optimal strategy and study the non-robust case (as the general features we observed remain similar when we robustify). In particular, we fix $\alpha = 0.1$ and $\beta =0.9 $ and vary $p$ to examine how the strategy for the knock-in and knock-out call changes as the agent becomes increasingly risk seeking. 

Figure \ref{fig: vary p} shows the strategies for $p = 1, 0.85$ and $0.70$ along two stock paths. We assume the same training parameters as before and include transaction costs. The strategies for $p = 1$ and $0.85$ are almost identical, and the agent seeks a strategy similar to option hedging. When $p$ decreases to 0.70, however, the agents strategy differs significantly when the embedded call option is not enforced (i.e,. in the knock-out case after the barrier breached, while in the knock-in case before the barrier is breached).  For example, let us focus on the left sample path for the knock-in option. In this sample path, the asset never went below the barrier, and the $p=0.7$ agent kept a long position for the entire path, while the $p=1$ and $p=0.85$ agents held a very small short position. The more aggressive agent was seeking profits, while the more conservative was initially protecting against the potential of the option getting knocked-in, but then slowly unwound the position as the asset price grew. For the same sample path, but now looking at the knock-out option. The $p=1$ and $p=0.85$ agents took on a hedging position against the call option, and kept slowly increasing that hedge through time as the asset price grew. While the $p=0.7$ agent initially ignored hedging the call option, but as time progressed they transformed their position into a hedge position for the call which appear more and more likely to not get knocked-out.
\begin{figure}[h]
  \centering
  \includegraphics[width=0.95\textwidth]{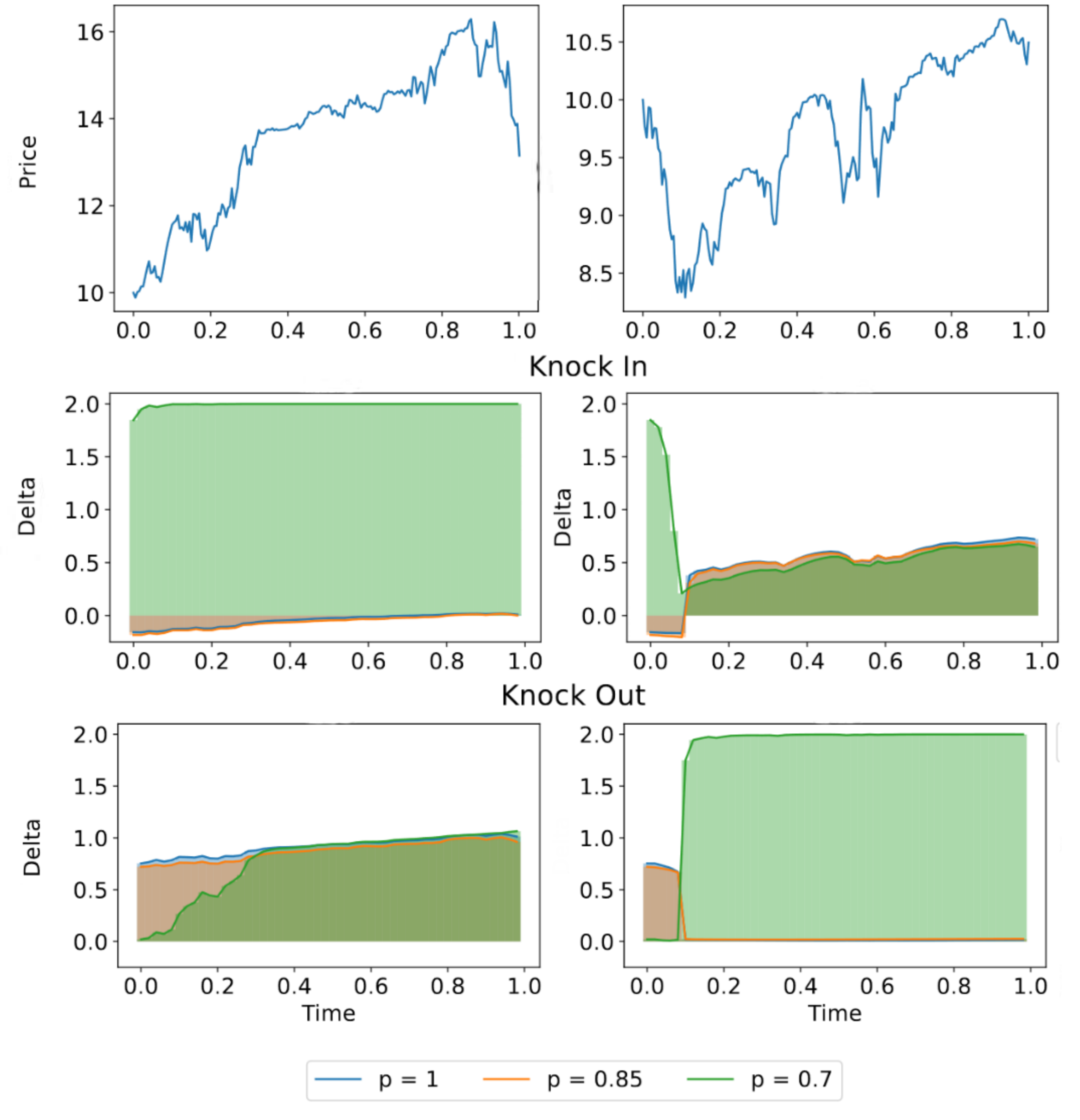}
  \caption{$\Delta_i$ along two sample paths for knock-in (above) and knock-out (below) options. Strategies shown correspond to p = 1, 0.85, and 0.70}
  \label{fig: vary p}
\end{figure}

For $\alpha-\beta$ risk measures, the transition between the agent performing option hedging (Section \ref{sec: Cvar}) to a mixed strategy of trading and hedging (Section \ref{sec: ab}) is not smooth. There seems to be a \textit{phase transition}: after a critical point $p^*$, the agent changes strategies. We conjecture similar phase transitions persists with risk-reward criteria and is not specific to the $\alpha-\beta$ risk measures.

To investigate this observation, we use the non-robust model with no transaction costs as a baseline. Recall that the $\alpha-\beta$ risk measure can be written as $\R^{\alpha \beta}_p[X^\phi] = - \frac{1}{\eta}\left(p\, \E[X^\phi \mathbb I(F(X^\phi) < \alpha)] - (1-p)\, \E[X^\phi \mathbb I(F(X^\phi) > \beta)]\right)$.
We refer to terms $\E[X^\phi \mathbb I(F(X^\phi) < \alpha)]$ and $\E[X^\phi \mathbb I(F(X^\phi) > \beta)]$ as lower and upper tail expectations (LTEs and UTEs), respectively. Figure \ref{fig:phase transition} shows the effect $p$ has on the optimal strategy. The phase transition appears to occur near $p^* \approx 0.84$ and $p^* \approx 0.81$ for knock-in and knock-out options, respectively. For $p^* < p < 1$, the agent strictly hedges the option. When $p^* < p$, it can be observed that  the value of $p$ has a negligible effect on the agent's strategy.
Once the value of $p$ falls below $p^*$, however, the agent commences a strategy targeting upper tail gains by engaging in both asset trading and option hedging. This critical level $p^*$  depends on the details of the option being sold. For knock-in calls, a lower barrier makes it more likely that the option is never knocked in. It becomes easier for the agent to seek upper tail gains without the negative option payoff, and thus results in a higher $p^*$. The reverse is true for knock-out calls.
\begin{figure}
     \centering
     \begin{subfigure}{0.45\textwidth}
         \centering
         \includegraphics[width=\textwidth]{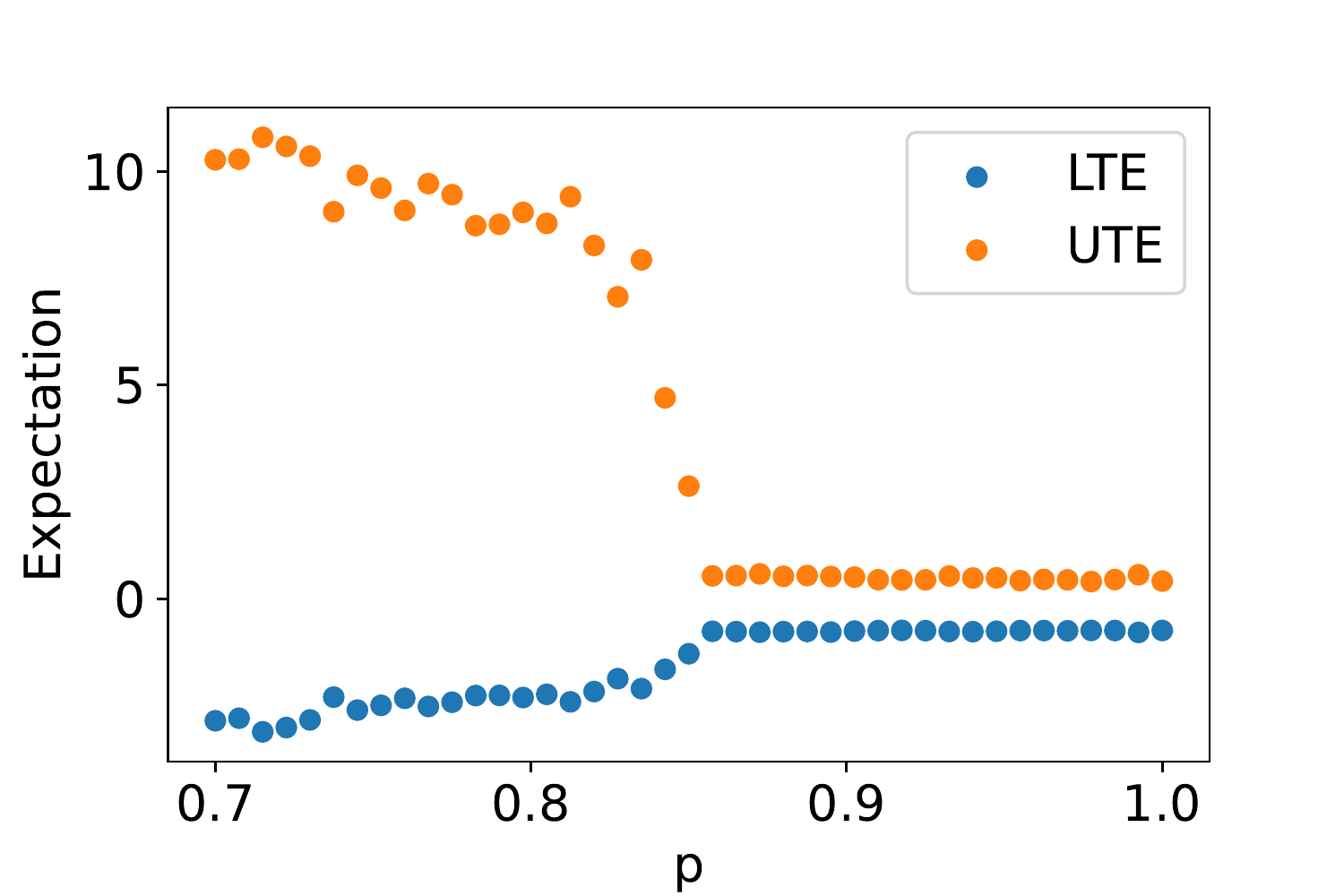}
         \caption{Knock-in Option}
     \end{subfigure}
     \hspace{0.5cm}
     \begin{subfigure}{0.45\textwidth}
         \centering
         \includegraphics[width=\textwidth]{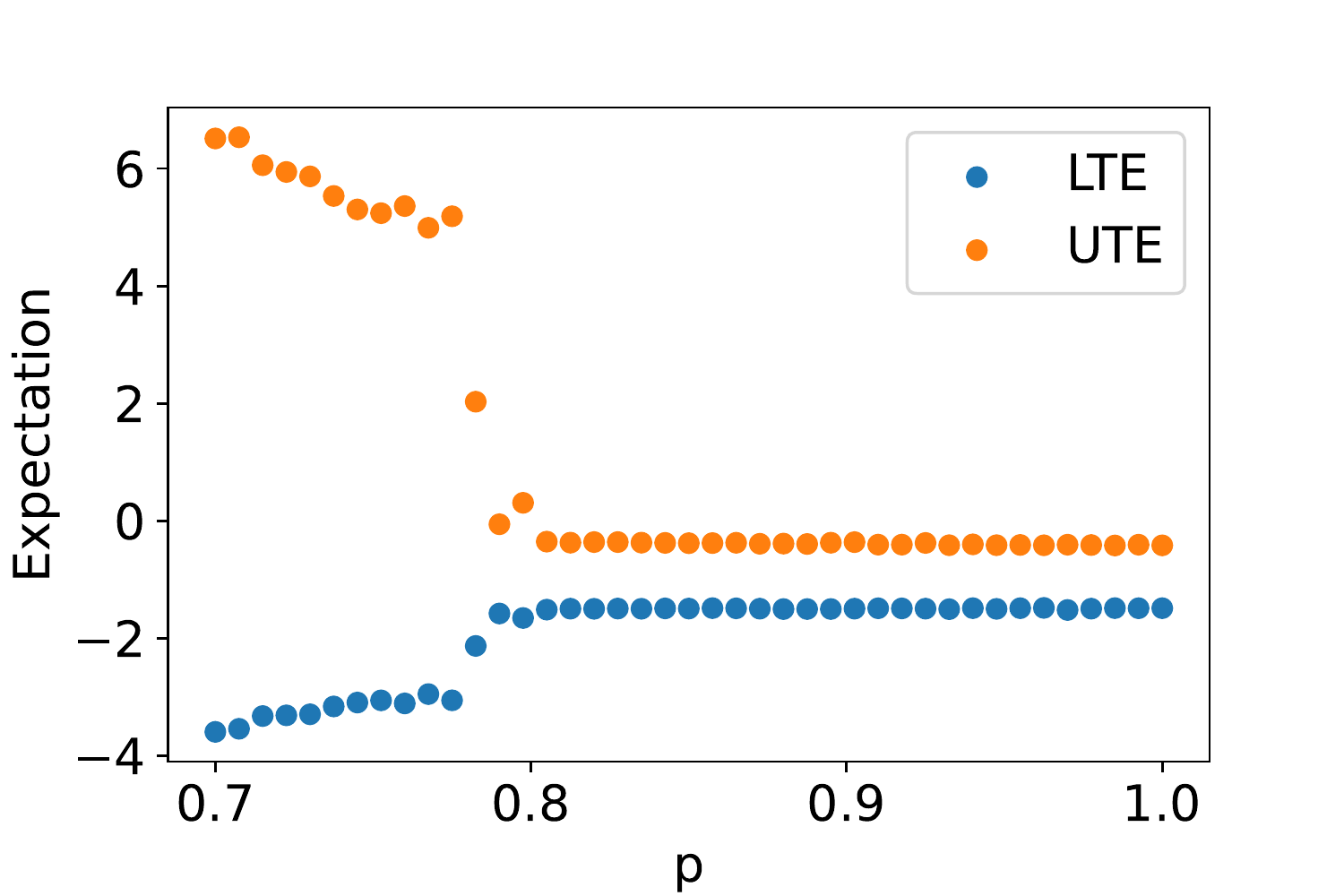}
         \caption{Knock-out option}
     \end{subfigure}
        \caption{Comparison of lower tail expectation (LTE) and upper tail expectation (UTE) for $\alpha = 0.1$ and $\beta = 0.9$ while varying $p$.}
        \label{fig:phase transition}
\end{figure}

To provide additional insight into this phenomenon, we examine the optimal actions of the agent under $\alpha-\beta$ risk measures in a scenario where there is no option exposure and inventory is subject to the constraint of being within the range of $[-M, M]$. In this case, the agent's terminal wealth $X^\phi$ is given by setting $V(0) = V(T) = 0$ in \eqref{eqn:wealth}. If $\inf_\phi R^{\alpha \beta}_p[X^\phi] = 0$, then $\Delta_i = 0$, for all $i \in \{0,..., N-1\}$, is an optimal solution -- irrespective of $p$. Not trading is an admissible strategy that attains $R^{\alpha \beta}_p[X^\phi]=0$.  If, however, there exists a strategy $\bar\Delta$ (with parameters $\bar\phi$) such that $R^{\alpha \beta}_p [X^{\bar \phi}] < 0$, then, as $R^{\alpha \beta}_p$ is positive homogeneous, the strategy $\tau\Delta_i$ (for $\tau>0$) always results in lower risk. Therefore, the strategy is to either trade nothing or to leverage fully. The latter case leads to the large increases in the UTE and this is what results in the phase transition.

As we have observed, this phenomenon seems to apply to options hedging as well. For instance, given that knock-in options behave similarly to assets when the barrier is not breached, the agent employs a strategy akin to asset hedging. The size of the jump in LTE depends on the inventory constraints. Particularly, if there are no constraints on inventory, no optimal solution exists when $p < p^*$. Furthermore, $p^*$ depends on the remaining model assumptions. For example, transaction costs decrease $p^*$ because transaction costs decrease the potential gains. Robustification also decreases $p^*$. This is due to the agent's higher risk aversion, as they aim to optimise the worst case scenario of $X^\phi$ rather than $X^\phi$ itself. The result of this effect is demonstrated by the dramatically different behaviours seen in Figure \ref{fig: ab out non robust} and Figure \ref{fig: ab out robust}. 

\section{Conclusion}

Here, we extend the results of robust risk-aware optimisation in Jaimungal et al. \cite{jaimungal2022robust} to exotic barrier options. To contextualise the strategies, we choose to focus on  barrier knock-in and knock-out call options. The methodology, however, is applicable to other options and even over the counter derivatives. This approach appears to produce results that are explainable and in line with intuition. Furthermore, robustification provides protection against uncertainty that can stem from many factors including misspecification in asset prices or changes in transaction dynamics. One issue worth noting is that while the strategies use dynamic decision making, the RDEU is not a time consistent risk measure, and the resulting optimisers should be viewed as precomit strategies. Although they are precomit strategies, they are not static in time or state. Nonetheless, it would be interesting to compare how the strategies for option hedging change when considering time-consistent dynamic risk measures using the methods developed in \cite{coache2021reinforcement}, \cite{coache2022conditionally}, and \cite{cheng2022markov}.

\section*{Acknowledgments}
This work was partially supported by the University of Toronto Data Sciences Institute.

\bibliographystyle{unsrt}  
\bibliography{references}

\end{document}